\renewcommand{\cite}[1]{\citep{#1}}
\newcommand{\oset}[3][0ex]{%
	\mathrel{\mathop{#3}\limits^{
			\vbox to#1{\kern-2\ex@
				\hbox{$\scriptstyle#2$}\vss}}}}
\definecolor{vir1of4}{RGB}{68,1,84}
\definecolor{vir2of4}{RGB}{49,104,142}
\definecolor{vir3of4}{RGB}{53,183,121}
\definecolor{vir4of4}{RGB}{253,231,37}
\newcommand{\eg}{e.\,g.\@\xspace}
\newcommand{\ie}{i.\,e.\@\xspace}
\newcommand{\bE}{\boldsymbol{E}}
\newcommand{\bQ}{\boldsymbol{Q}}
\newcommand{\bx}{\boldsymbol{x}}
\renewcommand{\vec}[1]{\bm{#1}}
\newcommand{\mat}[1]{\bm{#1}}
\newcommand{\upperq}[2]{\hat{q}_{#1}^{#2}}
\newcommand{\lowerq}[2]{\check{q}_{#1}^{#2}}
\newcommand{\sortQ}[1]{q_{#1}}
\newcommand{\maxD}[1]{\hat D(#1)}
\newcommand{\minD}[1]{\check D(#1)}
\newcommand{\singleton}[1]{\lbrace #1 \rbrace}
\newcommand{\given}{\,|\,}
\newcommand{\wuo}{y}
\newcommand{\wud}{d}
\newcommand{\wuh}{w}
\newcommand{\drs}[1]{D^*_{#1}}
\newcommand{\wut}{w}
\DeclarePairedDelimiter\abs{\lvert}{\rvert}%
\DeclarePairedDelimiter\norm{\lVert}{\rVert}%
\DeclarePairedDelimiter{\nint}{\lfloor}{\rceil}
\DeclareMathOperator*{\argmin}{arg\,min}
\DeclareMathOperator{\sign}{sign}
\newcommand{\BB}{\mathbb{B}}
\newcommand{\QUBOSET}{\mathcal{Q}}
\newcommand{\DR}{\mathsf{DR}}
\newcommand{\GRE}{\mathsf{G}}
\newcommand{\MOR}{\mathsf{M}}
\newcommand{\QUBO}{\textsc{Qubo}\@\xspace}
\newcommand{\BC}{\textsc{BinClustering}\@\xspace}
\newcommand{\SuS}{\textsc{SubsetSum}\@\xspace}
\newcommand*{\defeq}{\mathrel{\vcenter{\baselineskip0.5ex\lineskiplimit0pt\hbox{\scriptsize.}\hbox{\scriptsize.}}}%
	=}
\newcommand*{\eqdef}{=\mathrel{\vcenter{\baselineskip0.5ex\lineskiplimit0pt\hbox{\scriptsize.}\hbox{\scriptsize.}}}}
\newcommand{\optleq}{\oset[-1.34ex]{\scaleobj{0.75}{\ast}}{\sqsubseteq}}
\newcommand{\optsetsymb}{S^*}
\newcommand{\optset}[1]{\optsetsymb({#1})}
\newcommand{\suboptset}[2]{S_{#1}^*(#2)}
\newcommand{\diffset}[1]{D( #1 )}
\newcommand{\elemset}[1]{\mathcal{U}( #1 )}
\newcommand{\interv}[5]{%
	\draw ($(#1,#2)+(0,-2pt)$) -- ++(0,4pt) node[anchor=south,yshift=-2pt] {\tiny #4};%
	\draw (#1,#2) -- ++(#3,0);%
	\draw ($(#1,#2)+(#3,-2pt)$) -- ++(0,4pt) node[anchor=south,yshift=-2pt] {\tiny #5};}
\newcommand{\intervb}[5]{%
	\draw[color=vir3of4] ($(#1,#2)+(0,-3pt)$) -- ++(0,6pt) node[anchor=south,yshift=-2pt] {\tiny #4};%
	\draw[line width=4pt,color=vir3of4] (#1,#2) -- ++(#3,0);%
	\draw[color=vir3of4] ($(#1,#2)+(#3,-3pt)$) -- ++(0,6pt) node[anchor=south,yshift=-2pt] {\tiny #5};}
\newcommand{\intervbc}[5]{%
	\draw[color=#4] ($(#1,#2)+(0,-3pt)$) -- ++(0,6pt);%
	\draw[line width=4pt,color=#4] (#1,#2) -- ++(#3,0)
	node[anchor=west,color=black] {\tiny #5};%
	\draw[color=#4] ($(#1,#2)+(#3,-3pt)$) -- ++(0,6pt);}
\theoremstyle{thmstyleone}
\newtheorem{theorem}{Theorem}
\newtheorem{lemma}{Lemma}
\newtheorem{proposition}{Proposition}
\Crefname{definition}{Def.}{Defs.}
\Crefname{proposition}{Prop.}{Props.}
\theoremstyle{thmstyletwo}%
\newtheorem{example}{Example}%
\theoremstyle{thmstylethree}%
\newtheorem{definition}{Definition}%
\begin{document}
	
	\title[QUBO Compression]{Optimum-Preserving QUBO Parameter Compression}
	
	\author*[1]{\fnm{Sascha} \sur{M\"{u}cke}}\email{sascha.muecke@tu-dortmund.de}
	
	\author[2]{\fnm{Thore} \sur{Gerlach}}
	
	\author[2]{\fnm{Nico} \sur{Piatkowski}}
	
	\affil*[1]{\orgname{TU Dortmund}, \orgdiv{AI Group}, \orgaddress{\city{Dortmund}, \postcode{44227}, \country{Germany}}}
	
	\affil[2]{\orgname{Fraunhofer IAIS}, \orgaddress{\city{Sankt Augustin}, \postcode{53757}, \country{Germany}}}
	
	\abstract{
		Quadratic unconstrained binary optimization (\QUBO) problems are well-studied, not least because they can be approached using contemporary quantum annealing or classical hardware acceleration.
		However, due to limited precision and hardware noise, the effective set of feasible parameter values is severely restricted. As a result, otherwise solvable problems become harder or even intractable. 
		In this work, we study the implications of solving \QUBO problems under limited precision. 
		Specifically, it is shown that the problem's dynamic range has a crucial impact on the problem's robustness against distortions. 
		We show this by formalizing the notion of preserving optima between \QUBO instances and explore to which extend parameters can be modified without changing the set of minimizing solutions.
		Based on these insights, we introduce techniques to reduce the dynamic range of a given \QUBO instance based on theoretical bounds of the minimal energy value.
		An experimental evaluation on random \QUBO instances as well as \QUBO-encoded \BC and \SuS problems show that our theoretical findings manifest in practice. 
		Results on quantum annealing hardware show that the performance can be improved drastically when following our methodology.
	}
	
	\keywords{QUBO, Quantum Annealing, Dynamic Range, Compression}
	
	\maketitle
	
	\section{Introduction}
	Over the past decades, intensive research has been conducted on the exploitation of quantum effects for problem solving and faster computing \cite{grover1996,shor1997,kadowaki1998,farhi2000,vandam2001,farhi2014,peruzzo2014}.
	In the past years in particular, quantum machine learning (QML) emerged as a crossover field between machine learning (ML) and quantum computing (QC) \cite{lloyd2013,schuld2015,dunjko2016,biamonte2017}.
	One of several QC paradigms is adiabatic quantum computing (AQC), which exploits quantum tunneling effects to approximate the ground state of a parametric Hamiltonian, commonly of an Ising model \cite{kadowaki1998,farhi2000}: 
	In contemporary hardware implementations of AQC, the Hamiltonian can be written as quadratic function over binary vectors $\bx\in\lbrace 0,1\rbrace^n$, for which it tries to find the minimum energy eigenstate $\ket{\bx^*}$. 
	This procedure is equivalent to solving the \emph{quadratic unconstrained binary optimization} (\QUBO) problem, which has been investigated since the 1960s (see \eg \cite{kochenberger2014}). 
	Its value lies in its applicability to a wide range of combinatorial optimization problems, from economics \cite{laughhunn1970,hammer1971} over satisfiability \cite{kochenberger2005}, resource allocation and routing problems \cite{neukart2017,stollenwerk2019} to machine learning \cite{bauckhage2018,mucke2019a,date2020,mucke2023}---just to name a few.
	Solving \QUBO is, in general, \textsf{NP}-hard \cite{pardalos1992}.
	
	Despite the promise of quantum speedup, currently available devices that claim to perform AQC could not be proven to be faster than classical computing resources yet \cite{ronnow2014}. 
	Classical hardware-based \QUBO solvers have been developed as alternatives to imperfect quantum devices \cite{matsubara2017,mucke2019b} to facilitate both, research and practical applications. 
	A problem that hardware solvers have in common is the limited parameter precision: 
	While in theory \QUBO is defined with parameters in $\mathbb{R}$, digital devices rely on finite number representations that necessarily sacrifice some precision, as $B$ bits can only encode $2^B$ distinct values. 
	The well-known solution to this problem in the classical realm is floating-point arithmetic, e.g., IEEE 754 \cite{iso2020}. 
	
	A similar problem specific to D-Wave's quantum annealers is ``integrated control errors'' (ICE) \cite{d-wavesystems2021, dwaveice2023}, which randomly distort the Hamiltonian, leading to a skewed energy landscape.
	Depending on the particular problem instance, this distortion may change the optimum (see \cref{fig:perturbation}). Classical solutions like IEEE 754 are not applicable, due to the physical, analogue representation of the Hamiltonian. 
	
	\begin{figure}
		\centering
		\includegraphics[width=\textwidth]{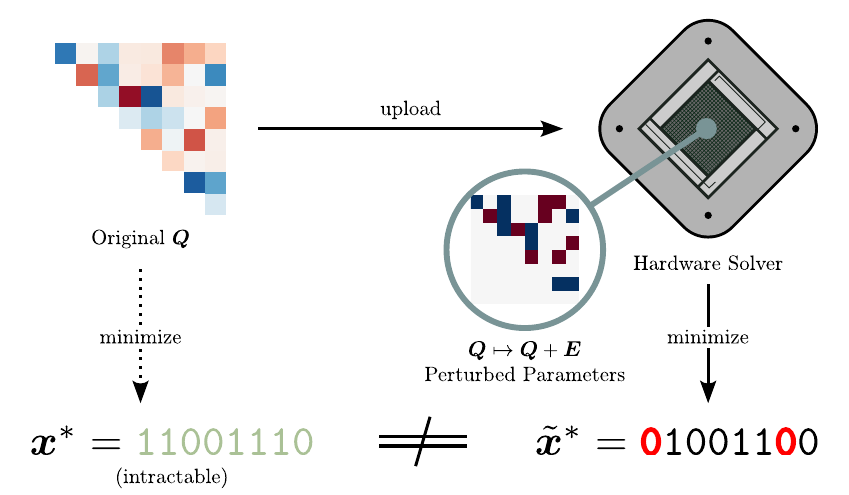}
		\caption{\QUBO solvers have a limited parameter resolution, leading to perturbations $\mat E$ that may result in false optima.}
		\label{fig:perturbation}
	\end{figure}
	
	Finally, similar effects occur also in digital annealing hardware, due to rounding of problem parameters, \eg, when converting between floating point formats. 
	To the best of our knowledge, no research has been conducted on the effect that rounding or otherwise distorting \QUBO or Ising model parameters has on the energy landscape and the distribution of optima.
	
	\begin{figure}[t]
		\centering
		\includegraphics[width=\textwidth]{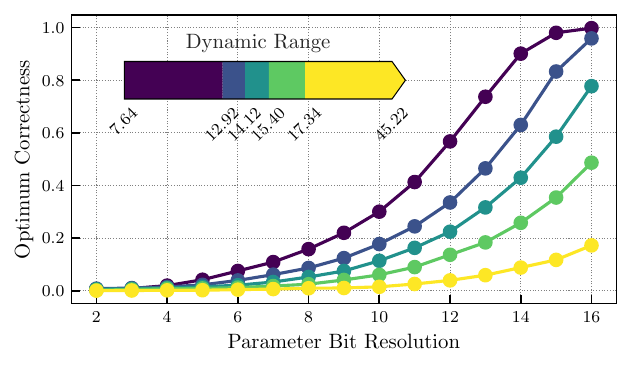}
		\caption{The effect of rounding \QUBO parameters to a certain number of bits on the probability that the optimum remains intact.
			The \QUBO matrices represent random instances of the \SuS problem with $n=16$.
			Each line represents the mean taken over 20,000 instances from a specific bin containing problems with a similar dynamic range (DR, see \cref{sec:dr}).
			For a detailed description of the \SuS problem and our methodology see \cref{sec:rounding-error-method}.
			We observe that the probability of retaining the correct optimum increases (1) with an increasing number of bits, and (2) with decreasing DR.}
		\label{fig:rounding-error}
	\end{figure}
	
	However, we find that rounding can drastically deteriorate the probability to find the global optimum, depending on the parameter distribution:
	\Cref{fig:rounding-error} shows that a low number of parameter bits and a high dynamic range (which we explain in \cref{sec:dr}) leads to a higher probability of changing the global optimum.
	
	In this work, we study the implications of solving \QUBO problems with limited parameter precision. 
	More specifically, we analyze the \emph{dynamic range}, \ie, the number of bits required to encode the \QUBO parameters faithfully:
	The smaller the dynamic range, the more robust the instance is against distortion. 
	We formalize the notion of optimum preservation between \QUBO instances based upon their set of minimizing bit vectors and explore to which extend parameters can be modified without changing said set.
	Finally, we introduce techniques to reduce the dynamic range of a given \QUBO instance based on theoretical bounds on the optimal energy value, which we demonstrate experimentally.
	The results demonstrate that the performance of quantum annealing hardware can be improved drastically when following our methodology.
	
	\subsection{Notation and Background}\label{sec:notation}
	We define $\mathbb{B}\defeq\lbrace 0,1\rbrace$ and $[n]\defeq\lbrace 1,\dots,n\rbrace$ for all $n\in\mathbb{N}$.
	The elements of $\BB$ are called \emph{bits}, and $\BB^n$ is the set of \emph{bit vectors} (or \emph{binary vectors}) of length $n$ for any $n\in\mathbb{N}$.
	
	\begin{definition}\label{def:fixed_subspace}
		Let $1\leq m\leq n$, $I\subseteq [n]$, $\vec{x}\in\mathbb{B}^n$, and $\vec{z}\in\mathbb{B}^m$. The $m$-dimensional sub-vector of $\vec{x}$ that contains only the variables indexed by $I$ is denoted via $\vec{x}_I\defeq(x_i)^\top_{i\in I}$.  The set of all $n$-dimensional bit vectors in which the variables indexed by $I$ are fixed to the values in $\vec{z}$ is denoted as $\mathbb{B}^n_{I\leftarrow \vec{z}}\defeq\lbrace \vec{x}': ~\vec{x}'\in\mathbb{B}^n, \vec{x}'_I=\vec{z}\rbrace$. 
	\end{definition}
	
	As a shorthand in later sections we may write $\mathbb{B}^n_{ij\leftarrow 10}$ instead of $\mathbb{B}^n_{\lbrace i,j\rbrace\leftarrow (1,0)^\top}$.
	Special binary vectors are $\vec 0^n\defeq (0,0,\dots,0)$ and $\vec 1^n\defeq (1,1,\dots,1)$. The superscript denotes their length, but we omit it when clear from context.
	
	Given any function $f:A\rightarrow B$ and a set $M\subseteq A$, we define $f(M)\defeq\set{f(a): ~a\in M}$ as the image set.
	Rounding a number $a\in\mathbb{R}$ to the nearest integer is denoted by $\nint{a}$.
	By convention, a number exactly halfway between two integers (with fractional part $0.5$) is rounded up.
	Additionally, we write $\nint{\mat{A}}$ to denote element-wise rounding for any real-valued vector or matrix $\mat{A}$.
	
	The object of study throughout this paper is the quadratic unconstrained binary optimization (\QUBO) problem, which is defined as follows:
	\begin{definition}\label{def:qubo}
		Let $\QUBOSET_n$ denote the set of all upper triangular matrices in $\mathbb{R}^{n\times n}$.
		A matrix $\mat{Q}\in\QUBOSET_n$ gives rise to a pseudo-boolean function $f_{\mat{Q}}:\mathbb{B}^n\rightarrow\mathbb{R}$ defined via \begin{equation*}\label{eq:qubo_def}
			f_{\mat{Q}}(\vec{x})\defeq \vec{x}^{\top}\mat{Q}\vec{x}=\sum_{i=1}^n\sum_{j=i}^n Q_{ij}x_ix_j\;.
		\end{equation*}
		This function is sometimes called \emph{energy}.
		The set of vectors which minimize $f_{\mat{Q}}$ is denoted by $\optset{\mat{Q}} := {\arg\min}_{\vec{x}} f_{\mat{Q}}(\vec{x})$. 
		Now, the
		\emph{quadratic unconstrained binary optimization} (\QUBO) problem is to determine $\vec{x}^*\in \optset{\mat{Q}}$. We drop the explicit dependence of $\optsetsymb$ on $\mat{Q}$ whenever it is clear from the context. 
	\end{definition}
	
	This problem is known to be \textsf{NP}-hard \cite{pardalos1992}, and its exact solution requires, in the worst case, an exhaustive search of an exponentially large candidate space. 
	A range of solution techniques has been developed over past decades.
	Some obtain the exact result \cite{narendra1977,rehfeldt2023} with worst-case exponential running time.
	Faster approximate techniques range from linear constraint solvers to heuristics such as simulated annealing \cite{kirkpatrick1983}, tabu search \cite{glover1998}, and genetic programming \cite{goldberg1987} -- see \cite{kochenberger2014} for a comprehensive overview.
	Perhaps most remarkably, \QUBO can be mapped to an Ising model \cite{brush1967} and solved through quantum annealing, which exploits quantum tunneling effects \cite{kadowaki1998}.
	
	\begin{lemma}\label{lemma:scaleinv}
		Let $\mat{Q}\in\QUBOSET_n$ and $\alpha>0$. The \QUBO problem is linear w.r.t. $\mat{Q}$, \ie, \begin{equation*}
			\alpha f_{\mat{Q}}(\vec{x}) = f_{\alpha \mat{Q}}(\vec{x}) \ \text{ and }\ f_{\mat{Q}+\mat{Q}'}(\vec{x}) = f_{\mat{Q}}(\vec{x}) + f_{\mat{Q}'}(\vec{x}), ~\forall \vec{x}\in\BB^n\;.
		\end{equation*}
	\end{lemma}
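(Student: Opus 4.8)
The plan is to simply unfold the definition of $f_{\mat{Q}}$ from \cref{def:qubo} and exploit the fact that, for a fixed $\vec{x}\in\BB^n$, the quadratic form $\mat{Q}\mapsto\vec{x}^{\top}\mat{Q}\vec{x}$ is linear in the matrix entries. For the first identity I would observe that $(\alpha\mat{Q})_{ij}=\alpha Q_{ij}$ and that $\alpha$ is a constant that can be pulled out of the finite double sum, so that $f_{\alpha\mat{Q}}(\vec{x})=\sum_{i=1}^n\sum_{j=i}^n (\alpha Q_{ij})x_ix_j=\alpha\sum_{i=1}^n\sum_{j=i}^n Q_{ij}x_ix_j=\alpha f_{\mat{Q}}(\vec{x})$, valid for every $\vec{x}\in\BB^n$.

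For the second identity I would first note that $\mat{Q}+\mat{Q}'$ is again upper triangular, hence lies in $\QUBOSET_n$, so $f_{\mat{Q}+\mat{Q}'}$ is a well-defined \QUBO energy; then $(\mat{Q}+\mat{Q}')_{ij}=Q_{ij}+Q'_{ij}$ and splitting the double sum term by term gives $f_{\mat{Q}+\mat{Q}'}(\vec{x})=\sum_{i=1}^n\sum_{j=i}^n Q_{ij}x_ix_j+\sum_{i=1}^n\sum_{j=i}^n Q'_{ij}x_ix_j=f_{\mat{Q}}(\vec{x})+f_{\mat{Q}'}(\vec{x})$. Equivalently, both identities follow in one line from bilinearity of $\vec{x}^{\top}\mat{Q}\vec{x}$ in the pair (matrix, vector), restricted to its dependence on $\mat{Q}$.

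\textbf{Main obstacle.}
There is essentially no obstacle: the statement is an immediate consequence of the definition of $f_{\mat{Q}}$ and the distributive law for finite sums. The only points worth flagging are (i) that the positivity assumption $\alpha>0$ is \emph{not} needed for either displayed identity as such — it is presumably recorded here because it is precisely what additionally guarantees $\optset{\alpha\mat{Q}}=\optset{\mat{Q}}$, which is how the lemma is used in the sequel — and (ii) that one should check $\QUBOSET_n$ is closed under scalar multiplication and addition, so that the matrices appearing on the left-hand sides genuinely give rise to \QUBO energies in the sense of \cref{def:qubo}.
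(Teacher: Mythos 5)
Your proof is correct and is exactly the argument the paper has in mind: the lemma is stated without an explicit proof precisely because it follows immediately from unfolding $f_{\mat{Q}}(\vec{x})=\sum_{i\le j}Q_{ij}x_ix_j$ and using linearity of the finite sum in the entries of $\mat{Q}$, which is what you do. Your side remarks — that $\alpha>0$ is only needed for the consequence $\optset{\alpha\mat{Q}}=\optset{\mat{Q}}$ and that $\QUBOSET_n$ is closed under scaling and addition — are accurate and consistent with how the paper uses the lemma.
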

	
	The lemma tells us that the optimization problems described by $\alpha\mat{Q}$ are equivalent for all $\alpha>0$. As
	a consequence, the set of minimizing vectors remains unchanged, \ie, $\optset{\mat{Q}} = \optset{\alpha\mat{Q}}$, as well as the relative function value differences between all binary vectors.

	We will expand further on such types of relations between \QUBO instances in \cref{sec:preserving}. 
	
	\section{Parameter Precision}
	Even though the entries of a \QUBO matrix are real-valued in theory, on any real-world computing device there is a limit to the precision with which numbers can be represented.
	Typically, binary representations are used, where floating-point numbers or 2-complement integers can be represented with a fixed number of bits.
	For instance, a register of $B$ bits using 2-complement can represent all integers in $\lbrace -2^{B-1},\dots,2^{B-1}-1\rbrace$ (the first bit represents the sign).
	Values that have a fractional part must be rounded to the nearest integer in order to be represented in 2-complement.
	This leads to an important practical observation, which follows from \cref{lemma:scaleinv}:
	\begin{proposition}
		Any real-world computing device with finite floating-point precision can only solve \QUBO instances with integer parameters.
	\end{proposition}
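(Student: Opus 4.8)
The plan is to reduce the claim to the scale invariance of \cref{lemma:scaleinv} together with the elementary fact that any finite number system stores only rationals that share a common denominator.

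First I would note that on a fixed device with, say, $B$-bit registers, a representable matrix $\mat{Q}\in\QUBOSET_n$ necessarily has all of its entries in some finite set $F\subseteq\mathbb{Q}$: a $B$-bit word encodes at most $2^B$ distinct values, and each floating-point (or fixed-point, or 2-complement) number is a dyadic rational $m\,2^{e}$ with $m,e\in\mathbb{Z}$ bounded in magnitude. Since $F$ is finite and $F\subseteq\mathbb{Q}$, there is a positive integer $D$ with $Df\in\mathbb{Z}$ for every $f\in F$; for a floating-point format one may take $D=2^{p}$, where $p$ absorbs the fixed number of fractional mantissa bits and the magnitude of the most negative representable exponent, so that all normalized and subnormal values lie on the grid $2^{-p}\mathbb{Z}$. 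Consequently $D\mat{Q}$ has integer entries, and $D\mat{Q}\in\QUBOSET_n$.

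Second I would invoke \cref{lemma:scaleinv} with $\alpha=D>0$: then $f_{D\mat{Q}}(\vec{x})=D\,f_{\mat{Q}}(\vec{x})$ for all $\vec{x}\in\BB^n$, and since multiplying an objective by a positive constant leaves its set of minimizers unchanged, $\optset{\mat{Q}}=\optset{D\mat{Q}}$. Thus every instance the device can represent has exactly the same set of minimizing bit vectors as an instance with integer parameters, so restricting attention to integer-valued \QUBO instances is without loss of generality — which is precisely the content of the proposition.

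The main subtlety, rather than a genuine obstacle, is pinning down the common denominator $D$ for an actual floating-point standard (e.g.\ IEEE~754) instead of merely appealing to finiteness of $F$: one has to check that subnormal numbers do not push the effective exponent below the bound already accounted for, which they do not, so a single power of two suffices for $D$. Everything else is an immediate application of \cref{lemma:scaleinv}.
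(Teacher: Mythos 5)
Your proof is correct and takes essentially the same route as the paper: every representable parameter is rational, so scaling by a common positive denominator (the paper uses the lcm of the denominators, you use a power of two tailored to floating-point formats) yields integer entries, and \cref{lemma:scaleinv} ensures the set of minimizers is unchanged. The extra discussion of IEEE~754 exponents and subnormals is harmless but not needed for the statement.
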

	\begin{proof}
		Any number with finite floating-point precision is, in fact, rational, therefore we can assume $Q_{ij}=\frac{p_{ij}}{q_{ij}}$ for any $i\leq j\in [n]$ in a given \QUBO instance $\mat{Q}\in\QUBOSET_n$.
		Let $m\defeq \mathrm{lcm}\lbrace q_{ij}\rbrace_{i\leq j}$ the least common multiciple of all denominators, then $m\bQ$ has integer parameters.
	\end{proof}
	
	In this section we investigate how the precision of the parameter representation, specifically rounding, affects the energy landscape of $f_{\mat{Q}}$, and how we can evaluate the minimum number of bits necessary to represent the parameters of $\mat{Q}$ faithfully.
	Further we will define a relation between \QUBO instances $\mat{Q}$ and $\mat{Q}'$ that holds if some or all minimizing vectors are preserved between  $f_{\mat{Q}}$ and $f_{\mat{Q}'}$.
	Using this concept, we show that there is a theoretical lower bound on a scaling factor that preserves at least one of the optima after rounding the parameters.
	
	\subsection{Preserving Optima}\label{sec:preserving}
	Every \QUBO instance has at least one binary vector with minimum energy, which is the global optimum of the optimization problem. 
	As known from Lemma~\ref{lemma:scaleinv}, scaling a \QUBO instance with a positive factor $\alpha$ does not change the optima, therefore $\optset{\mat Q}=\optset{\alpha\mat Q}$.
	Note that $\optset{\mat Q}$ cannot be the empty set, as in a non-empty finite set of real numbers a smallest element always exists.
	
	Having a choice between multiple optima has several benefits.
	For one, when there is a variety of equally good solutions, we can choose one according to other criteria that are not encoded in the optimization problem (\eg, the solution with fewest 1-bits).
	Moreover, certain iterative optimization strategies may converge more quickly when there are multiple global optima scattered through the search space, as the distance to the nearest optimum across all binary vectors decreases.
	However, when we are interested in just \emph{any} optimal solution, we have to accept that, when modifying the parameters of $\mat{Q}$, some optima may get lost.
	To this end, we define the notion of \emph{optimum inclusion} on \QUBO instances:
	\begin{definition}
		Let $\mat Q,\mat Q'\in\QUBOSET_n$.
		We say that $\mat Q$ \emph{includes the optima of} $\mat Q'$, written as $\mat Q\optleq\mat Q'$, when the set of optima of $\mat Q$ is a subset of the optima of $\mat Q'$: \begin{equation*}
			\mat Q\optleq\mat Q' ~\Longleftrightarrow ~\optset{\mat Q}\subseteq \optset{\mat Q'}
		\end{equation*}
	\end{definition}
	
	The relation $\optleq$ induces a preorder on $\QUBOSET_n$.
	It is not antisymmetric, as $\mat Q\optleq\mat Q'$ and $\mat Q'\optleq\mat Q$ does not imply $\mat Q=\mat Q'$, but it is reflexive and transitive.
	Informally, if $\mat Q\optleq\mat Q'$, we know that $\mat Q$ and $\mat Q'$ share at least one global optimum, and that $\mat Q$ has no optima that $\mat Q'$ does not have as well.
	Therefore, a binary vector that minimizes $f_{\mat Q}$ is guaranteed to also minimize $f_{\mat Q'}$.
	
	\subsection{Dynamic Range}
	\label{sec:dr}
	To make the degree of precision we need to faithfully represent \QUBO parameters measurable, we borrow the notion of \emph{dynamic range} from signal processing.
	\begin{definition}
		Let $X\subset\mathbb{R}$ be a finite set.
		First, we define the set of absolute differences between all elements as $\diffset{X}\defeq \lbrace \abs{x-y}:~x,y\in X, \,x\neq y\rbrace$, and we write $\minD{X}\defeq\min\diffset{X}$ and $\maxD{X}\defeq\max\diffset{X}$.
		The \emph{dynamic range} (DR) of $X$ is defined as \begin{equation}
			\DR(X) \defeq \log_2\biggl(\frac{\maxD{X}}{\minD{X}}\biggr) \label{eq:dynamic_range}
		\end{equation}
		Its unit is \emph{bits}.
		The DR of a \QUBO matrix $\mat{Q}$ is defined as the DR of the set of its entries:
		\begin{align*}
			\DR(\mat{Q}) &\defeq \DR(\elemset{\mat Q}),\ &\text{where }~\elemset{\mat Q} &\defeq \lbrace Q_{ij}:~i,j\in[n]\rbrace\;.
		\end{align*}
		Note that always $0\in\elemset{\mat Q}$, because $\mat Q$ is triangular, thus $Q_{ij}=0$ when $i>j$.
		In the following, we will simply write $\diffset{\mat Q}$, $\minD{\mat Q}$ and $\maxD{\mat Q}$ instead of $\diffset{\elemset{\mat Q}}$, $\minD{\elemset{\mat Q}}$ and $\maxD{\elemset{\mat Q}}$.
	\end{definition}
	
	A large dynamic range implies that $\mat{Q}$ needs many bits to represent all parameters faithfully in binary, because its parameters both cover a large value range and require small gradations.
	Note that scaling has no effect on dynamic range.
	
	\begin{example}
		Consider the following $2\times 2$ upper triangular matrices:
		\begin{align*}
			\mat{Q} &= \begin{bmatrix}-1 &14380 \\ &-2\end{bmatrix} &\mat{Q}' &=\begin{bmatrix}-1 &3 \\ &-2\end{bmatrix}
		\end{align*}
		Matrix $\mat{Q}$ defines the \QUBO problem $f_{\mat{Q}}(\bx)=-x_1+14380x_1x_2-2x_2$.
		The global optimum is $\vec{x}^*=(0,1)^\top$ with value $f_{\mat{Q}}(\vec{x}^*)=-2$.
		As we see at a glance, the value $Q_{12}$ is positive and very large, acting as a penalty weight between bits $x_1$ and $x_2$.
		However, a much smaller value has the same effect, as we see with $\mat{Q}'$, which is identical to $\mat{Q}$ except for $Q'_{12}=3$.
		The global optimum of $f_{\mat{Q}'}$ is still $\vec{x}^*=(0,1)^{\top}$ with $f_{\mat{Q}'}(\vec{x}^*)=-2$, therefore $\mat Q\optleq\mat Q'$.
		When we compare the dynamic ranges of $\mat{Q}$ and $\mat{Q}'$, we find that \begin{align*}
			\DR(\mat{Q})&=\log_{2}\left(\frac{14380-(-2)}{0-(-1)}\right) &\DR(\mat{Q}')&=\log_{2}\left(\frac{3-(-2)}{0-(-1)}\right) \\
			&\approx 13.812 &&\approx 2.322,
		\end{align*}
		which is a tremendous reduction:
		While we need 14 bits to encode the elements of $\mat{Q}$, we only need 3 for $\mat{Q}'$.
		This demonstrates that, in principle, it is possible to reduce the dynamic range while preserving the minimizing vectors of the \QUBO problem.
	\end{example}
	
	One way to enforce a reduction of dynamic range is scaling and rounding, which sets the smallest meaningful difference between values to 1.
	It is easy to see that this gives us an upper bound on the DR:
	\begin{proposition}\label{prop:roundingdr}
		Let $\mat{Q}\in\QUBOSET_n$; w.l.o.g. we assume that $\mat Q$ is scaled such that $\maxD{\mat Q}=1$.
		For any $\alpha>0$ the DR of $\nint{\alpha\mat Q}$ is bounded above by
		\begin{align*}
			\DR(\nint{\alpha\mat Q})&\leq\log_2(1+\alpha)\;.
		\end{align*}
	\end{proposition}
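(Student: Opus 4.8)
The plan is to bound the numerator and denominator of the fraction defining the dynamic range in \eqref{eq:dynamic_range} separately. Since $\DR(\nint{\alpha\mat Q}) = \log_2\bigl(\maxD{\nint{\alpha\mat Q}}/\minD{\nint{\alpha\mat Q}}\bigr)$ and $\log_2$ is increasing, it suffices to establish the two estimates $\minD{\nint{\alpha\mat Q}} \geq 1$ and $\maxD{\nint{\alpha\mat Q}} \leq 1+\alpha$. Throughout I tacitly assume that $\nint{\alpha\mat Q}$ has at least two distinct entries, so that its DR is defined in the first place; the degenerate case (all entries collapse to a single value under rounding) is vacuous and can be excluded by convention.

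For the denominator, the key observation is that every entry of $\nint{\alpha\mat Q}$ is an integer. Hence any two \emph{distinct} entries differ by at least $1$, so $\diffset{\nint{\alpha\mat Q}}$ consists of positive integers and in particular $\minD{\nint{\alpha\mat Q}} = \min\diffset{\nint{\alpha\mat Q}} \geq 1$.

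For the numerator, I would first record the elementary rounding estimate $\abs{\nint{u}-\nint{v}} \leq \abs{u-v}+1$, which follows by writing $\nint{u}=u+\varepsilon_u$ and $\nint{v}=v+\varepsilon_v$ with $\abs{\varepsilon_u},\abs{\varepsilon_v}\leq\tfrac12$ and applying the triangle inequality. Now let $\nint{\alpha a}$ and $\nint{\alpha b}$ be two distinct entries of $\nint{\alpha\mat Q}$, arising from entries $a,b\in\elemset{\mat Q}$. Then $a\neq b$, so $\abs{a-b}\leq\maxD{\mat Q}=1$ by the normalization assumption, and therefore
\begin{equation*}
	\abs{\nint{\alpha a}-\nint{\alpha b}} \leq \abs{\alpha a-\alpha b}+1 = \alpha\abs{a-b}+1 \leq \alpha+1\;.
\end{equation*}
Taking the maximum over all such pairs yields $\maxD{\nint{\alpha\mat Q}}\leq 1+\alpha$.

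Combining the two bounds gives $\DR(\nint{\alpha\mat Q}) \leq \log_2\bigl((1+\alpha)/1\bigr) = \log_2(1+\alpha)$, as claimed. I do not expect any genuine obstacle here: the argument is a direct two-sided estimate, and the only point requiring a moment of care is the rounding inequality $\abs{\nint{u}-\nint{v}}\leq\abs{u-v}+1$ together with the observation that rounded values are integers so their pairwise gaps are bounded below by $1$.
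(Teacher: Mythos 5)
Your proof is correct and follows essentially the same route as the paper: bound the numerator by $1+\alpha$ (rounding perturbs the maximal gap by at most $1$) and the denominator by $1$ (distinct integers differ by at least $1$), then combine inside $\log_2$. Your write-up is simply a more explicit version of the paper's terse $\epsilon$/$\epsilon'$ bookkeeping, and your remark about the degenerate all-entries-collapse case is a minor point the paper leaves implicit.
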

	\begin{proof}
		Due to the normalization we have $\DR(\alpha\mat Q)=\log_2(\alpha/\beta)$ with $\beta\le\alpha$.
		By rounding we enforce $\minD{\nint{\alpha\mat Q}}\geq 1$, therefore \begin{align*}
			\DR(\nint{\alpha\mat Q}) &= \log_2\biggl(\frac{\epsilon+\alpha}{\epsilon'+1}\biggr) &&(-1\leq\epsilon<1)\\
			&\leq \log_2\biggl(\frac{1+\alpha}{\epsilon'+1}\biggr) &&(0\leq\epsilon')\\
			&\leq \log_2(1+\alpha)\;.
		\end{align*}
	\end{proof}
	
	Rounding a \QUBO instance $\mat Q$ does not, in general, preserve $\optset{\mat Q}$, as it skews the values of $f_{\mat Q}$.
	We can model this perturbation caused by rounding after scaling with some $\alpha>0$ as \begin{align*}
		f_{\nint{\alpha\bQ}}(\bx) &= \bx^{\top}\nint{\alpha\bQ}\bx \\
		&= \bx^{\top}\bigl(\alpha\bQ+\bE(\bQ,\alpha)\bigr)\bx \\
		&= \alpha\bx^{\top}\bQ\bx+\bx^{\top}\bE(\bQ,\alpha)\bx \\
		&= \alpha f_{\bQ}(\bx)+f_{\bE(\bQ,\alpha)}(\bx)\;.
	\end{align*}%
	Here, $\bE(\bQ,\alpha)\in(-\frac{1}{2},\frac{1}{2}]^{n\times n}$ denotes the matrix of difference between the real values and their nearest integers in $\mat Q$ after scaling with $\alpha>0$.
	When we reverse the scaling after rounding, the error on each entry in $\mat Q$ is bounded in $(-\frac{1}{2\alpha},\frac{1}{2\alpha}]$, and the error on the total function value of any bit vector $\bx$ is exactly \begin{equation}\label{eq:roundingerror}
		\epsilon_{\mat Q,\alpha}(\vec x)\defeq f_{\bE(\bQ,\alpha)}(\bx)/\alpha\;.
	\end{equation}
	This matrix sum representation of rounding bridges the gap to the precision of the D-Wave quantum annealers, where the ICEs can be modeled by adding random noise to the parameters of the underlying Ising model.
	
	\subsection{Optimal Rounding}
	The question remains how to ensure that the rounding error does not lead to false optima.
	As \cref{eq:roundingerror} shows, the magnitude of the error generally decreases as $\alpha$ increases, because $\abs{\mat{E}(\mat Q, \alpha)_{ij}}\leq 0.5,~\forall i,j\in [n]$, regardless of $\mat Q$ and $\alpha$.
	This allows us to bound the error between $f_{\nint{\alpha\mat Q}/\alpha}$ and $f_{\mat Q}$ within an interval of $\pm\frac{C}{\alpha}$ for some constant $C>0$.
	If we choose $\alpha$ such that the rounding errors cannot bridge the gap between the lowest and second to lowest value of $f_{\mat Q}$, we can guarantee that no non-optimal vector's value is rounded down to the optima's value.
	\begin{definition}\label{def:spectralgap}
		Let $\mat{Q}\in\QUBOSET_n$, and let $y_1 \defeq \min f_{\mat{Q}}(\BB^n)$ its lowest and $y_2 \defeq \min(f_{\mat{Q}}(\BB^n)\backslash\set{y_1})$ its second-to lowest value w.r.t. $f_{\mat Q}$, then the \emph{spectral gap} $\gamma_{\mat{Q}}$ is defined as \begin{equation*}
			\gamma_{\mat{Q}} \defeq y_2-y_1\;.
		\end{equation*}
	\end{definition}
	
	Originally, the spectral gap is defined as the difference between the lowest and second to lowest eigenvalue of a Hamiltonian operator in physics, however the concept can be extended to any optimization problem with a real-valued objective function.
	
	In general, computing the spectral gap is at least as difficult as solving the \QUBO problem itself, i.e., intractable for large $n$.
	However, it gives us a theoretical lower bound for $\alpha$ that allows for optimum-preserving rounding:
	\begin{theorem}\label{theorem:gamma}
		Let $\mat{Q}\in\QUBOSET_n$ and $\gamma_{\mat{Q}}$ its spectral gap.
		Then \begin{align*}
			\forall \alpha\geq\alpha^*: ~\nint{\alpha\mat{Q}}\optleq\mat{Q} &&\text{where }\alpha^*\defeq \frac{n^2+n}{4\gamma_{\mat Q}}\;.
		\end{align*}
	\end{theorem}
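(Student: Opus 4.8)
The plan is to combine the additive decomposition of the rounded energy established just before the theorem,
\[
f_{\nint{\alpha\mat{Q}}}(\bx)=\alpha f_{\mat{Q}}(\bx)+f_{\bE(\mat{Q},\alpha)}(\bx),
\]
with a uniform bound on the error polynomial $f_{\bE(\mat{Q},\alpha)}$ and the definition of the spectral gap. I would fix an arbitrary $\vec{x}^*\in\optset{\nint{\alpha\mat{Q}}}$ and show that $f_{\mat{Q}}(\vec{x}^*)=y_1$; this means $\vec{x}^*\in\optset{\mat{Q}}$, and since $\vec{x}^*$ was arbitrary it gives $\optset{\nint{\alpha\mat{Q}}}\subseteq\optset{\mat{Q}}$, i.e. $\nint{\alpha\mat{Q}}\optleq\mat{Q}$.

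First I would bound the error. Writing $\bE\defeq\bE(\mat{Q},\alpha)$, the fact that $\mat{Q}$ (hence $\bE$) is upper triangular means $f_{\bE}(\bx)=\sum_{1\le i\le j\le n}E_{ij}x_ix_j$ has at most $\binom{n}{2}+n=\tfrac{n^2+n}{2}$ monomials, and $\abs{E_{ij}}\le\tfrac12$ for all $i,j$. The key move is to bound the \emph{difference} of the error at two bit vectors rather than each value separately: for $\bx,\vec{z}\in\BB^n$,
\[
\abs{f_{\bE}(\bx)-f_{\bE}(\vec{z})}=\Bigl|\sum_{i\le j}E_{ij}(x_ix_j-z_iz_j)\Bigr|\le\tfrac12\cdot\tfrac{n^2+n}{2}=\tfrac{n^2+n}{4},
\]
because each factor $x_ix_j-z_iz_j$ lies in $\{-1,0,1\}$ and there are at most $\tfrac{n^2+n}{2}$ terms. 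Bounding $\abs{f_{\bE}(\bx)}+\abs{f_{\bE}(\vec{z})}$ instead would lose a factor of two; this refinement is exactly what makes the constant $\tfrac{n^2+n}{4}$, and hence $\alpha^*$, come out as stated.

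Next comes the comparison. Pick any $\vec{z}\in\optset{\mat{Q}}$, so $f_{\mat{Q}}(\vec{z})=y_1$. Since $\vec{x}^*$ minimizes $f_{\nint{\alpha\mat{Q}}}$ we have $f_{\nint{\alpha\mat{Q}}}(\vec{x}^*)\le f_{\nint{\alpha\mat{Q}}}(\vec{z})$, and substituting the decomposition and the bound above,
\[
\alpha\bigl(f_{\mat{Q}}(\vec{x}^*)-y_1\bigr)=\bigl(f_{\nint{\alpha\mat{Q}}}(\vec{x}^*)-f_{\nint{\alpha\mat{Q}}}(\vec{z})\bigr)-\bigl(f_{\bE}(\vec{x}^*)-f_{\bE}(\vec{z})\bigr)\le\tfrac{n^2+n}{4},
\]
so $f_{\mat{Q}}(\vec{x}^*)-y_1\le\tfrac{n^2+n}{4\alpha}\le\tfrac{n^2+n}{4\alpha^*}=\gamma_{\mat{Q}}$ for every $\alpha\ge\alpha^*$. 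By the definition of the spectral gap, the only value of $f_{\mat{Q}}$ in the interval $(y_1,y_1+\gamma_{\mat{Q}}]$ is $y_2$, so $f_{\mat{Q}}(\vec{x}^*)\in\{y_1,y_2\}$; for $\alpha>\alpha^*$ the estimate is strict, which forces $f_{\mat{Q}}(\vec{x}^*)=y_1$ and completes the argument.

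I expect the first step to be the real content — spotting that one should estimate the error \emph{gap} between the candidate and a true optimum as a single $\tfrac{n^2+n}{2}$-term sum, not two separate errors; the rest is bookkeeping. The one remaining delicate point is the equality case $\alpha=\alpha^*$: there the inequality chain collapses to equalities, pinning every active monomial $E_{ij}(x^*_ix^*_j-z_iz_j)$ to $-\tfrac12$, and since $E_{ij}\in(-\tfrac12,\tfrac12]$ this is only consistent with a degenerate configuration, which I would exclude by a short separate argument (the statement being immediate for all $\alpha>\alpha^*$).
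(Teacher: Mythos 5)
Your argument is essentially the paper's own: the same decomposition $f_{\nint{\alpha\mat Q}}=\alpha f_{\mat Q}+f_{\mat E(\mat Q,\alpha)}$, the same count of $\tfrac{n^2+n}{2}$ upper-triangular entries with per-entry error of magnitude at most $\tfrac12$, and the same comparison against the spectral gap. The difference is one of rigor rather than route: the paper checks only the worst-case pair ($\bm 1$ optimal, $\bm 0$ second best) and appeals to its being the worst case, while you bound the error difference $\abs{f_{\mat E}(\vec x)-f_{\mat E}(\vec z)}\le\tfrac{n^2+n}{4}$ uniformly over all pairs and run the comparison for an arbitrary minimizer of the rounded instance. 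That cleaner execution yields the same constant and completely settles every $\alpha>\alpha^*$.

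The one point that does not go through is your closing remark about $\alpha=\alpha^*$: the degenerate equality configuration cannot be excluded by a short separate argument, because it is realizable. Equality forces $\vec z=\bm 1$, $\vec x^*=\bm 0$ and $E_{ij}=+\tfrac12$ for all $i\le j$ (the case $E_{ij}=-\tfrac12$ is ruled out by the half-open interval, but $+\tfrac12$ is attained whenever $\alpha^*Q_{ij}$ has fractional part exactly $\tfrac12$), and this does happen: for $n=1$ and $\mat Q=(-1)$ one has $\gamma_{\mat Q}=1$, $\alpha^*=\tfrac12$, and $\nint{\alpha^*\mat Q}=(0)$ since the convention rounds $-0.5$ up to $0$; then $\optset{\nint{\alpha^*\mat Q}}=\BB^1\not\subseteq\optset{\mat Q}=\lbrace 1\rbrace$, so the stated inclusion fails at exactly $\alpha=\alpha^*$. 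In other words, the boundary case you flagged is a genuine edge-case failure of the non-strict statement itself, not something your argument can repair; the theorem is safe for all $\alpha>\alpha^*$, or at $\alpha^*$ whenever no scaled entry has fractional part exactly $\tfrac12$. The paper's proof has the same blind spot, since it only enforces the non-strict inequality $f_{\nint{\alpha\mat Q}}(\bm 0)-f_{\nint{\alpha\mat Q}}(\bm 1)\ge 0$, which permits precisely this optimum-breaking tie; so your analysis is, if anything, sharper than the published argument, and the honest fix is to replace the promised exclusion step by the observation above.
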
%
	\begin{proof}
		Recall \cref{eq:roundingerror}.
		Each entry of $\mat{E}(\mat Q,\alpha)$ is bounded in $(-\frac{1}{2},\frac{1}{2}]$.
		It is clear that the worst-case rounding error is proportional to the norm of the binary vector, as $f_{\mat Q}(\vec x)$ contains more summands than $f_{\mat Q}(\vec x')$ if $\norm{\vec x}_1>\norm{\vec x'}_1$.
		Accordingly, the overall worst-case rounding error would occur if each entry of $\mat E$ was $+\frac{1}{2}$ or close to $-\frac{1}{2}$ respectively, one of the optimal or second to optimal vectors was $\bm 1$ and the other $\bm 0$.
		Assume that $\bm 1$ is a global minimum and $\bm 0$ has the second to minimal value, then we find 
		\begin{equation*}
			-\frac{n^2+n}{4}=-\frac12\frac{n(n+1)}{2}<\bm 1^{\top}\mat E(\mat Q,\alpha)\bm 1\leq \frac12\frac{n(n+1)}{2}= \frac{n^2+n}{4}\;,
		\end{equation*}%
		and $f_{\nint{\alpha\mat Q}}(\bm 0)=f_{\mat Q}(\bm 0)=0$.
		From \cref{def:spectralgap} we find that $\gamma_{\alpha\mat{Q}}=\alpha\gamma_{\mat{Q}}$ for any $\alpha>0$.
		To ensure that the combined rounding error cannot cause $f_{\nint{\alpha\mat Q}}(\bm 0) -f_{\nint{\alpha\mat Q}}(\bm 1)\le 0$, we need 
		\begin{align*}
			0&\overset{!}{\le} f_{\nint{\alpha\bQ}}(\bm 0) - f_{\nint{\alpha\bQ}}(\bm 1) 
			= f_{\alpha\bQ}(\bm 0)-f_{\alpha\bQ}(\bm 1)- \bm 1^{\top}\mat E(\mat Q,\alpha)\bm 1 \\
			&=\alpha\gamma_{\mat Q}-\bm 1^{\top}\mat E(\mat Q,\alpha)\bm 1 
			\le\alpha\gamma_{\mat Q}-\frac{n^2+n}{4} \\
			\Leftrightarrow ~\alpha &\overset{!}{\ge}\frac{n^2+n}{4\gamma_{\mat Q}}\;.
		\end{align*}
	\end{proof}
	
	\section{Parameter Compression}
	As we have seen in \cref{prop:roundingdr}, scaling and rounding reduces the dynamic range.
	However, in general, reducing the DR leads to coarser energy gradations.
	E.g., a \QUBO instance where parameters are encoded with only 2 bits can only have values in $\singleton{-2,-1,0,1}$, which may be insufficient to accurately preserve the value function and, consequently, the minimizing binary vectors.
	In this section we develop strategies to balance these competing objectives and reduce the DR while keeping (some of) the optimal vectors intact.
	To this end, we modify parameter values while trying to stay within bounds which guarantee that a minimal solution stays minimal, and a non-minimal solution does not become minimal.
	
	For now, assume a \QUBO instance $\mat{Q}\in\QUBOSET_n$ has a unique global minimum $\vec{x}^*$ of value $y^*\defeq f_{\mat{Q}}(\vec{x}^*)$.
	Our two objectives can be formulated as a constrained optimization problem, where we want to find a matrix $\mat{A}\in\QUBOSET_n$ such that
	\begin{align*}
		\argmin_{\mat{A}\in\QUBOSET_n}\quad &\DR(\mat{Q}+\mat{A})\\
		\text{s.t.}\quad &\mat{Q}+\mat{A}\optleq \mat{Q}\;.
	\end{align*}
	To approach this problem, we update the parameters $Q_{kl}\mapsto Q_{kl}+\wut$ sequentially.
	Deciding which \QUBO parameter to update and how to choose the right value for $\wut$ will be the focus of the following subsections.
	For now, let the indices $k,l\in [n], k\le l$ of some parameter within $\mat Q$ be arbitrary but fixed.
	For conciseness, we write $\vec x^*_{ab}$ and $y_{ab}$ instead of $x^*_{kl\leftarrow ab}$ and $y_{kl\leftarrow ab}$.
	
	\subsection{Bounding the Optimal Value}
	Recall our definition of $\mathbb{B}^n_{I\leftarrow \vec{z}}$ from \cref{sec:notation}.
	Fixing one or more bits in a binary vector to constants induces subspaces of $\BB^n$, one for each possible assignment of variables indexed by $I$, which is $2^{\abs I}$ in total.
	Each subspace has its own set of minimizing binary vectors w.r.t. $f_{\mat{Q}}$.
	\begin{definition}\label{def:subspace}
		Given $\mat Q\in\QUBOSET_n$, indices $I\subseteq [n]$ and an assignment $\vec z\in\BB^{\abs I}$, the subspace optima are defined as \begin{equation*}
			\suboptset{I\leftarrow\vec z}{\mat Q}\defeq\lbrace\vec x^*: ~\vec x^*\in\BB^n_{I\leftarrow\vec z},\,f_{\mat Q}(\vec x^*)\leq f_{\mat Q}(\vec x) ~\forall\vec x\in\BB^n_{I\leftarrow\vec z}\rbrace\;.
		\end{equation*}
	\end{definition}
	
	Note that $\suboptset{I\leftarrow\vec z}{\mat Q}\cap\suboptset{I\leftarrow\vec z'}{\mat Q}=\emptyset$ for any $\vec z\neq\vec z'$.
	Therefore we can choose an arbitrary (but fixed) element $\vec x^*_{ab}\defeq\suboptset{kl\leftarrow ab}{\mat Q}$ as a representative for all $ab\in\BB^2$.
	Their respective values are denoted by $y^*_{ab}\defeq f_{\mat{Q}}(\vec{x}^*_{ab})$.
	Naturally, $y^*_{ab}$ are just as hard to compute as solving \QUBO itself.
	Therefore, we work with upper and lower bounds on the true values, which are much easier to compute.
	We will use them to determine the update parameter $\wut$.
	\begin{definition}\label{def:bounds}
		Upper (lower) bounds for $y^*_{ab}$ are denoted by $\hat y_{ab}$ ($\check y_{ab}$), such that
		\begin{equation*}
			\check y_{ab}\le y^*_{ab}\le\hat y_{ab},~\forall ab\in\BB^2\;.
		\end{equation*}
		Further, let
		\begin{align*}
			\wuo^- &\defeq \min \{0,\min\lbrace\hat{y}_{00},\hat{y}_{01},\hat{y}_{10}\rbrace-\check{y}_{11}\} \;,\\
			\wuo^+ &\defeq \max \{0, \min\lbrace\check{y}_{00},\check{y}_{01},\check{y}_{10}\rbrace-\hat{y}_{11}\}\;,
		\end{align*}
		if $k\not=l$. Otherwise, when $k=l$, let $\wuo^-=\min\{0,\hat{y}_{00} - \check{y}_{11}\}$ and $\wuo^+ = \max\{0,\check{y}_{00} - \hat{y}_{11}\}$. 
	\end{definition}
	\begin{theorem}\label{theorem:optimum_bound}
		Let $\mat Q\in\QUBOSET_n$, and assume we perform an update $Q_{kl}\mapsto Q_{kl}+\wut$.
		Given $\wuo^-$ and $\wuo^+$ as defined in \cref{def:bounds}, then an optimum is preserved as long as
		\begin{align}
			\wuo^-\le\wut \le \wuo^+\;.\label{eq:bound_opt}
		\end{align}
	\end{theorem}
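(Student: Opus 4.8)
The plan is to exploit that a single-parameter update $Q_{kl}\mapsto Q_{kl}+\wut$ perturbs $f_{\mat Q}$ additively and that this perturbation is supported on exactly one of the bit-fixing subspaces of \cref{def:fixed_subspace}. Let $\mat Q'$ denote $\mat Q$ with $\wut$ added to entry $(k,l)$. By the additivity in \cref{lemma:scaleinv}, $f_{\mat Q'}(\vec x)=f_{\mat Q}(\vec x)+\wut\,x_kx_l$ when $k\neq l$, and $f_{\mat Q'}(\vec x)=f_{\mat Q}(\vec x)+\wut\,x_k$ when $k=l$ (using $x_k^2=x_k$ on $\BB$). Hence $f_{\mat Q'}$ agrees with $f_{\mat Q}$ verbatim on $\BB^n\setminus\BB^n_{kl\leftarrow 11}$ and equals $f_{\mat Q}+\wut$ on $\BB^n_{kl\leftarrow 11}$. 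Consequently, for every $ab\in\BB^2$ the subspace-minimizer set $\suboptset{kl\leftarrow ab}{\mat Q}$ is unchanged (the same set works for $\mat Q'$), while the subspace-minimal value on $\BB^n_{kl\leftarrow 11}$ increases by exactly $\wut$; and the global minimum of $\mat Q'$ equals $y' := \min\{\,y^\circ,\ y^*_{11}+\wut\,\}$, where I abbreviate $y^\circ := \min\{y^*_{00},y^*_{01},y^*_{10}\}$ for $k\neq l$ and $y^\circ := y^*_{00}$ for $k=l$.

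I would then split on the sign of $y^\circ-y^*_{11}$; crucially, neither $y^\circ$ nor $y^*_{11}$ is known, only their bounds, so both cases must be covered by the single window $[\wuo^-,\wuo^+]$. If $y^\circ\le y^*_{11}$, then $\mat Q$ attains its global minimum $y^* = y^\circ$ at some $\vec x$ outside $\BB^n_{kl\leftarrow 11}$, and I claim $\wut\ge\wuo^-\ge y^\circ-y^*_{11}$: the first inequality is the hypothesis, and for the second I use $y^\circ\le\min\{\hat y_{00},\hat y_{01},\hat y_{10}\}$ and $y^*_{11}\ge\check y_{11}$, so $y^\circ-y^*_{11}\le\min\{\hat y_{00},\hat y_{01},\hat y_{10}\}-\check y_{11}$, while also $y^\circ-y^*_{11}\le 0$, whence $y^\circ-y^*_{11}$ lies below the minimum of these two quantities, which is exactly $\wuo^-$ (\cref{def:bounds}). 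Thus $y^*_{11}+\wut\ge y^\circ$, so $y'=y^\circ=y^*$ is still attained by $\vec x$ (the update does not change its value), i.e. $\vec x\in\optset{\mat Q}\cap\optset{\mat Q'}$. Symmetrically, if $y^\circ> y^*_{11}$, then $\mat Q$ attains $y^*=y^*_{11}$ only inside $\BB^n_{kl\leftarrow 11}$, and using $\check y_{ab}\le y^*_{ab}$ and $\hat y_{11}\ge y^*_{11}$ one gets $\wut\le\wuo^+\le y^\circ-y^*_{11}$, hence $y^*_{11}+\wut\le y^\circ$ and $y'=y^*_{11}+\wut$; this value is attained precisely by the (value-shifted) set $\suboptset{kl\leftarrow 11}{\mat Q}$, which lies in $\optset{\mat Q}$ because $y^*_{11}=y^*$, so again a minimizer is shared. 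Either way a minimizer of $\mat Q$ remains a minimizer of $\mat Q'$. The case $k=l$ is the identical argument with the two-block partition $\BB^n_{k\leftarrow 0}\sqcup\BB^n_{k\leftarrow 1}$ and the one-dimensional forms of $\wuo^{\pm}$.

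I expect the one genuinely delicate point to be the case split: since one cannot tell whether an optimum sits inside $\BB^n_{kl\leftarrow 11}$ or not, a two-sided window is required here, in contrast to the one-sided bound of \cref{theorem:gamma}. The two bound manipulations are then short, but must be arranged so that $\wuo^-$ dominates $y^\circ-y^*_{11}$ exactly when that quantity is nonpositive and $\wuo^+$ is dominated by it exactly when it is nonnegative -- which is precisely why both $\wuo^-$ and $\wuo^+$ are truncated at $0$ in \cref{def:bounds}. Finally I would note that the window is always nonempty, since $\wuo^-\le 0\le\wuo^+$ by construction, so $\wut=0$ (no update) is trivially admissible.
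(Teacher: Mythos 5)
Your proposal is correct and follows essentially the same route as the paper's proof: decompose $\BB^n$ by the four (or two) assignments of $(x_k,x_l)$, note that the update shifts only the $11$-subspace values by $\wut$, split on whether an optimum lies outside or inside that subspace, and use the bounds $\check y_{ab},\hat y_{ab}$ together with the truncation at $0$ to cover the unknown case with the single window $[\wuo^-,\wuo^+]$. If anything, your value-based case split ($y^\circ\le y^*_{11}$ versus $y^\circ>y^*_{11}$) handles ties slightly more carefully than the paper's wording ``since $y^*_{11}>y^*$'', but the argument is the same.
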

	\begin{proof}
		We focus on the case $k\not=l$, the case $k=l$ is analogous. 
		A global minimum $y^*$ must be equal to exactly one of the four subspaces' optima $y^*_{00},y^*_{01},y^*_{10},y^*_{11}$.
		Notice that changing $Q_{kl}$ by $\wut$ affects only $y^*_{11}$.
		Assume $\vec{x}^*\neq \vec{x}^*_{11}$, then an optimum is preserved if
		\begin{align}
			&y^*_{11}+\wut\ge y^* \nonumber\\
			\Leftrightarrow\ &y^*_{11}+\wut\ge\min\{y^*_{00},y^*_{01},y^*_{10}\} \nonumber\\
			\Leftarrow\ & \check{y}_{11}+\wut\ge\min\{\hat{y}_{00},\hat{y}_{01},\hat{y}_{10}\}\;.\label{eq:opt_bound_proof}
		\end{align}
		Furthermore, $\wut$ can take any positive value, since $y^*_{11}> y^*$. 
		Combining this observation with \cref{eq:opt_bound_proof}, we end up with the lower bound	\begin{equation}
			\wut \ge \min \{0,\min\lbrace \hat{y}_{00},\hat{y}_{01},\hat{y}_{10}\rbrace-\check{y}_{11}\}=\wuo^-\;. \label{eq:bound_opt_lower}
		\end{equation}
		If $\vec{x}^*= \vec{x}^*_{11}$, we can similarly deduce an upper bound
		\begin{equation}
			\wut \le \max \left\lbrace0,\min\lbrace \check{y}_{00},\check{y}_{01},\check{y}_{10}\rbrace-\hat{y}_{11}\right\rbrace=\wuo^+\;.
			\label{eq:bound_opt_upper}
		\end{equation}
		Combining \cref{eq:bound_opt_lower,eq:bound_opt_upper} we obtain \cref{eq:bound_opt}.
	\end{proof}
	
	\begin{figure}
		\centering
		\begin{subfigure}{.4\textwidth}
			\centering
			\begin{tikzpicture}[scale=.9]
				\interv{0.5}{.5}{1.5}{$\check y_{11}$}{$\hat y_{11}$}
				\interv{-2}{1}{1}{$\check y_{10}$}{$\hat y_{10}$}
				\interv{-3}{1.5}{2.5}{$\check y_{01}$}{$\hat y_{01}$}
				\interv{-3.5}{2}{2.8}{$\check y_{00}$}{$\hat y_{00}$}
				\draw[dotted] (0.5,0.3) -- (0.5,2.2);
				\draw[dotted] (-1,0.3) -- (-1,2.2);
				\node[anchor=north] at (0.5,.2) {$\check{y}_{11}$};
				\node[anchor=north] at (-1,.2) {$\min\singleton{\hat y_{ab}}$};
				\draw[-latex] (-4,3) -- (2.2,3) node[anchor=west] {$\mathbb{R}$};
				\intervb{-3.5}{2.7}{2.5}{}{}
			\end{tikzpicture}
			\caption{$\check{y}_{11}\ge\min\{\hat{y}_{00},\hat{y}_{01},\hat{y}_{10}\}$ implies that $\vec x^*\neq\vec x^*_{11}$.}
		\end{subfigure}%
		\hfill
		\begin{subfigure}{0.4\textwidth}
			\centering
			\begin{tikzpicture}[scale=.9]
				\interv{-2.5}{.5}{1.5}{$\check y_{11}$}{$\hat y_{11}$}
				\interv{0.5}{1}{1}{$\check y_{10}$}{$\hat y_{10}$}
				\interv{0.3}{1.5}{2}{$\check y_{01}$}{$\hat y_{01}$}
				\interv{0.8}{2}{0.8}{$\check y_{00}$}{$\hat y_{00}$}
				\draw[dotted] (-1,0.3) -- (-1,2.2);
				\draw[dotted] (.3,.3) -- (.3,2.2);
				\node[anchor=north] at (-1,.2) {$\hat{y}_{11}$};
				\node[anchor=north] at (.3,.2) {$\min\singleton{\check y_{ab}}$};
				\draw[-latex] (-3,3) -- (2.2,3) node[anchor=west] {$\mathbb{R}$};
				\intervb{-2.5}{2.7}{1.5}{}{}
			\end{tikzpicture}
			\caption{$\hat{y}_{11}\le\min\{\check{y}_{00},\check{y}_{01},\check{y}_{10}\}$ implies that $\vec x^*=\vec x^*_{11}$.}
		\end{subfigure}
		\caption{Illustration of \cref{prop:check_optimum}, the green bars indicate the interval the global optimum must fall into; when the lower bound for a subspace $\BB^n_{kl\rightarrow ab}$ is greater than an upper bound of any other subspace, we can conclude that $\vec x^*_{ab}$ cannot be optimal (a).
			On the other hand, when an the upper bound for a subspace $\BB^n_{kl\rightarrow ab}$ is lower than the lower bounds of all other subspaces, we can conclude that $\vec x^*_{ab}$ is optimal (b).
			For above example, we set $ab=11$.}
		\label{fig:optbounds}
	\end{figure}
	
	\cref{eq:bound_opt} uses bounds ($\check y_{ab}$ and $\hat y_{ab}$) on the true optima $y^*_{ab}$ to give us an interval for $\wut$ if we want to preserve an optimum.
	These bounds can also be used for determining optimality of $\vec{x}^*_{ab}$.
	\begin{proposition}\label{prop:check_optimum}
		The following implications hold:
		\begin{align}
			\check{y}_{ab}>\min\left( \{\hat{y}_{00},\hat{y}_{01},\hat{y}_{10},\hat{y}_{11}\}\setminus \{\hat{y}_{ab}\}\right) &\Rightarrow \vec{x}^*\neq\vec{x}^*_{ab} \label{eq:check_suboptimum} \\
			\hat{y}_{ab}<\min\left( \{\check{y}_{00},\check{y}_{01},\check{y}_{10},\check{y}_{11}\}\setminus\{\check{y}_{ab}\}\right)  &\Rightarrow \vec{x}^*=\vec{x}^*_{ab} \label{eq:check_optimum}
		\end{align}
	\end{proposition}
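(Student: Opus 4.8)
The plan is to exploit that the four subspaces $\BB^n_{kl\leftarrow ab}$, $ab\in\BB^2$, partition $\BB^n$, so that the global minimum value satisfies $y^*=\min\lbrace y^*_{00},y^*_{01},y^*_{10},y^*_{11}\rbrace$ and a subspace representative $\vec x^*_{ab}$ is a global optimum exactly when $y^*_{ab}$ attains this minimum. I would read the right-hand sides of \cref{eq:check_suboptimum} and \cref{eq:check_optimum} as minima over the three indices $a'b'\neq ab$, and use only the defining inequalities $\check y_{a'b'}\le y^*_{a'b'}\le\hat y_{a'b'}$ from \cref{def:bounds}.

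For \cref{eq:check_suboptimum}, I would assume $\check y_{ab}$ exceeds the minimum of the remaining upper bounds. Then some index $a'b'\neq ab$ satisfies $\hat y_{a'b'}<\check y_{ab}$, and chaining the bounds yields
\begin{equation*}
	y^*_{a'b'}\le\hat y_{a'b'}<\check y_{ab}\le y^*_{ab}\;,
\end{equation*}
whence $y^*\le y^*_{a'b'}<y^*_{ab}$, so subspace $ab$ contains no global minimizer and $\vec x^*\neq\vec x^*_{ab}$. For \cref{eq:check_optimum}, I would assume $\hat y_{ab}$ is smaller than the minimum of the remaining lower bounds; then for every $a'b'\neq ab$ we get $y^*_{a'b'}\ge\check y_{a'b'}>\hat y_{ab}\ge y^*_{ab}$, so $y^*_{ab}$ is the \emph{strict} minimum of $\lbrace y^*_{00},y^*_{01},y^*_{10},y^*_{11}\rbrace$. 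Hence $\optset{\mat Q}\subseteq\BB^n_{kl\leftarrow ab}$ and the global optima coincide with $\suboptset{kl\leftarrow ab}{\mat Q}$; under the standing uniqueness assumption this forces $\vec x^*=\vec x^*_{ab}$.

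This is essentially a diagram chase through the bound inequalities, so I do not anticipate a genuine obstacle. The one point requiring care is the meaning of the set differences $\lbrace\hat y_{00},\dots,\hat y_{11}\rbrace\setminus\lbrace\hat y_{ab}\rbrace$ when two bounds happen to coincide: it must exclude only the $ab$-term, i.e. be read as $\min_{a'b'\neq ab}\hat y_{a'b'}$ (and analogously for the $\check y$'s), and with that reading both implications hold verbatim. A secondary subtlety worth flagging is that \cref{eq:check_optimum} actually delivers the stronger statement that $ab$ is the \emph{unique} optimal subspace, which is precisely what is needed to identify the chosen representative $\vec x^*_{ab}$ with the global optimum $\vec x^*$.
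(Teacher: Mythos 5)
Your proposal is correct and follows essentially the same route as the paper: chain the bound inequalities $\check y_{ab}\le y^*_{ab}\le\hat y_{ab}$ to compare the true subspace optima and conclude via the partition of $\BB^n$ into the four subspaces $\BB^n_{kl\leftarrow ab}$. Your explicit remarks on reading the set difference as $\min_{a'b'\neq ab}$ and on invoking the uniqueness of the global minimum to identify $\vec x^*$ with $\vec x^*_{ab}$ are sound clarifications of points the paper leaves implicit (it only states that \cref{eq:check_optimum} ``follows analogously'').
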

	\begin{proof}
		We focus on the case $k\not=l$, the case $k=l$ is analogous. Assume that \cref{eq:check_suboptimum} holds, \ie,
		\begin{align*}
			&\ \check{y}_{ab}>\min\left( \{\hat{y}_{00},\hat{y}_{01},\hat{y}_{10},\hat{y}_{11}\}\setminus \{\hat{y}_{ab}\}\right) \\
			\Rightarrow&\ y^*_{ab} > \min\left( \{y^*_{00},y^*_{01},y^*_{10},y^*_{11}\}\setminus \{y^*_{ab}\}\right) \\
			\Leftrightarrow&\ y^*_{ab} > y^* ~\Leftrightarrow ~\vec{x}^*\neq\vec{x}^*_{ab}\;.
		\end{align*}
		The result in \cref{eq:check_optimum} follows analogously.
		See \cref{fig:optbounds} for a visualization.
	\end{proof}
	If we find the inequality in \cref{eq:check_optimum} to be true for some two-bit assignment $ab$, the dimension of the \QUBO search space can be reduced by fixing $x_k=a$ and $x_l=b$.
	Similar reduction techniques can be found in \cite{boros2006, lewis2017, glover2018}.
	Knowing that $\vec{x}^*\neq\vec{x}^*_{11}$ we can get rid of the upper bound \cref{eq:bound_opt} (cf. proof of \cref{theorem:optimum_bound}).
	
	The questions remains how to obtain lower and upper bounds on $y^*_{ab}$.
	Simple, but weak bounds can be computed in $\Theta(n^2)$.
	\begin{proposition}
		Let $k,l$ be two indices with $1\leq k\le l\leq n$ and $ab\in\BB^2$ a variable assignment.
		An upper bound for $y^*_{ab}$ is given by \begin{equation*}
			y^*_{ab}\leq f_{\mat Q}(\vec 0_{kl\leftarrow ab})\;.
		\end{equation*}
	\end{proposition}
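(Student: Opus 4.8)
The plan is to recognise that this upper bound is nothing more than the defining property of subspace optima applied to one concrete feasible point. First I would observe that $\vec 0_{kl\leftarrow ab}$ --- the all-zero bit vector with its $k$-th and $l$-th entries overwritten by $a$ and $b$ respectively --- lies in the fixed subspace $\BB^n_{kl\leftarrow ab}$ by construction. By \cref{def:subspace}, the chosen representative $\vec x^*_{ab}$ minimises $f_{\mat Q}$ over exactly this subspace, so $f_{\mat Q}(\vec x^*_{ab}) \le f_{\mat Q}(\vec x)$ for every $\vec x \in \BB^n_{kl\leftarrow ab}$. Instantiating this inequality at $\vec x = \vec 0_{kl\leftarrow ab}$ and substituting $y^*_{ab} \defeq f_{\mat Q}(\vec x^*_{ab})$ gives the claimed bound immediately.

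The $\Theta(n^2)$ cost asserted in the surrounding text is then just the cost of a single evaluation of the quadratic form $\vec x^\top \mat Q \vec x$; in fact, since all but the $k$-th and $l$-th coordinates of $\vec 0_{kl\leftarrow ab}$ vanish, only the handful of terms $Q_{kk}a$, $Q_{ll}b$ and (for $k\neq l$) $Q_{kl}ab$ survive, so the bound is computable in constant time once $\mat Q$ is given. There is essentially no technical obstacle in the proof itself; the only thing worth flagging is conceptual, namely that this bound is intentionally loose --- it discards every interaction term that involves a free variable --- and its value comes precisely from being trivially cheap to evaluate, with sharper (and correspondingly more expensive) bounds left to the discussion that follows.
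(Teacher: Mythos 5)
Your argument is correct and is precisely the reasoning the paper leaves implicit (it states the proposition without proof): $\vec 0_{kl\leftarrow ab}$ lies in $\BB^n_{kl\leftarrow ab}$, so its energy upper-bounds the subspace minimum $y^*_{ab}$ by \cref{def:subspace}. Your side remark that this particular evaluation collapses to the few terms involving $x_k$ and $x_l$ is also accurate and does not conflict with the paper's $\Theta(n^2)$ statement, which refers to the general cost of computing such bounds.
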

	\noindent
	To obtain better upper bounds, we can invest more computational effort and, for instance, perform a local search in the space $\BB^n_{kl\leftarrow ab}$, or perform rejection sampling, and record the lowest observed value of $f_{\mat Q}$ s.t. $\vec x_{kl}=ab$.
	
	In order to derive simple lower bounds we can take only the negative entries of $\mat Q$ and compute the lowest possible sum that can be formed from them.
	\begin{proposition}
		Let $k,l$ and $ab$ as before.
		Define $\mat Q^-$ such that $Q^-_{ij}=\min\singleton{0,Q_{ij}}$, \ie, the matrix containing only the negative values of $\mat Q$. 
		Then a lower bound for $y^*_{ab}$ is given by \begin{equation*}
			y^*_{ab}\geq f_{\mat Q^-}(\vec 1_{kl\leftarrow ab})\;.
		\end{equation*}
	\end{proposition}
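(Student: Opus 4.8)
The plan is to derive the bound from two elementary facts: a pointwise comparison $f_{\mat Q}\ge f_{\mat Q^-}$ valid on all of $\BB^n$, and the observation that $f_{\mat Q^-}$ is trivial to minimize over the constrained subspace because it is a sum of monomials with non-positive coefficients.

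First I would establish $f_{\mat Q}(\vec x)\ge f_{\mat Q^-}(\vec x)$ for every $\vec x\in\BB^n$, arguing monomial by monomial: if $Q_{ij}\ge 0$ then $Q_{ij}x_ix_j\ge 0 = Q^-_{ij}x_ix_j$, while if $Q_{ij}<0$ then $Q^-_{ij}=Q_{ij}$ and the two terms coincide; summing over $i\le j$ gives the claim. Restricting the minimum to $\BB^n_{kl\leftarrow ab}$ then yields $y^*_{ab}=\min_{\vec x\in\BB^n_{kl\leftarrow ab}}f_{\mat Q}(\vec x)\ \ge\ \min_{\vec x\in\BB^n_{kl\leftarrow ab}}f_{\mat Q^-}(\vec x)$.

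Next I would identify the minimizer of $f_{\mat Q^-}$ on that subspace. Since every coefficient $Q^-_{ij}\le 0$, each monomial $Q^-_{ij}x_ix_j$ attains its smallest value precisely when $x_ix_j=1$, i.e.\ when both coordinates are $1$. For a vector constrained by $x_k=a,\,x_l=b$, setting every free coordinate to $1$ makes $x_ix_j=1$ for all pairs not touching a coordinate that is forced to $0$; the pairs that do touch such a coordinate have $x_ix_j=0$ for \emph{every} feasible vector anyway, so they contribute $0$ uniformly across the subspace. Hence $\vec 1_{kl\leftarrow ab}$ simultaneously minimizes every monomial and therefore minimizes $f_{\mat Q^-}$ over $\BB^n_{kl\leftarrow ab}$, giving $\min_{\vec x\in\BB^n_{kl\leftarrow ab}}f_{\mat Q^-}(\vec x)=f_{\mat Q^-}(\vec 1_{kl\leftarrow ab})$. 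Chaining this with the previous inequality produces $y^*_{ab}\ge f_{\mat Q^-}(\vec 1_{kl\leftarrow ab})$, and a single evaluation of $f_{\mat Q^-}$ costs $\Theta(n^2)$.

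There is no genuine obstacle; the only point needing a little care is the constrained minimization of $f_{\mat Q^-}$ — specifically spelling out that the two forced bits can only ever zero out monomials (and do so for all feasible vectors at once), so they never conflict with the ``set everything else to $1$'' strategy that minimizes the remaining non-positive terms.
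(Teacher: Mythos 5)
Your argument is correct: the pointwise bound $f_{\mat Q}(\vec x)\ge f_{\mat Q^-}(\vec x)$ on $\BB^n$ combined with the observation that $\vec 1_{kl\leftarrow ab}$ minimizes every non-positive monomial of $f_{\mat Q^-}$ simultaneously on $\BB^n_{kl\leftarrow ab}$ gives exactly the claimed inequality. The paper states this proposition without a proof, only sketching the idea of summing the negative entries, and your write-up is precisely the formalization of that intended argument, including the one detail worth making explicit (that the bits forced to $0$ zero out the same monomials for every feasible vector, so they cannot conflict with setting all free bits to $1$).
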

	\noindent
	The lower bound can be improved by exploiting roof duality \cite{boros2008, hammer1984} at the cost of higher (yet polynomial) computational complexity.
	
	\subsection{Reducing the Dynamic Range}\label{sec:reduce_methods}
	We have established intervals within which the parameters of a \QUBO problem can be modified while an optimum is preserved.
	The question remains how to choose the values in a way that reduces the dynamic range.
	In this subsection we show multiple approaches to achieve this goal.
	
	Let $m\defeq n^2$ be the number of entries of an $n$ by $n$ square matrix.
	For any $\mat{Q}\in\QUBOSET_n$ there is an ordering $\pi:[m]\rightarrow [n]^2$ of values in $\mat{Q}$ such that 
	\begin{equation*}
		\sortQ{i}\leq \sortQ{i+1}, ~\sortQ{i}\equiv Q_{\pi(i)},~\forall i\in[m]
	\end{equation*}
	Using this notation, $\sortQ{1}=\min\,\elemset{\mat Q}$, $\sortQ{m}=\max\,\elemset{\mat Q}$, and further $\maxD{\mat Q}=\sortQ{m}-\sortQ{1}$ and $\exists j\in[m-1]:~\minD{\mat Q}=q_{j+1}-q_j$.
	Note that about half of all $q_i$ are 0, as $\mat Q$ is upper triangular.
	
	\begin{theorem}\label{theorem:dynamic_range}
		Let $\mat{Q}\in\QUBOSET_n$ and $\pi(\ell)=(k,l)$ with $k\leq l$ and $Q_{kl}\neq 0$.
		When adding a value $\wut\in\mathbb R$ to $Q_{kl}$, the dynamic range does not increase, \ie, $\DR(\mat{Q})\ge\DR(\mat{Q}+\wut\vec{e}_k\vec{e}_l^{\top})$, if the following two conditions hold:
		\begin{enumerate}
			\item $\wut$ is bounded: \begin{equation}
				\underbrace{\sortQ{1}-\sortQ{\ell} +\delta_{m\ell}\left(\sortQ{m-1}- \sortQ{m}\right)- \drs{\ell}}_{\eqdef\wud^-}\le\wut\le \underbrace{\sortQ{m}-\sortQ{\ell} +\delta_{1\ell}\left(\sortQ{2}- \sortQ{1}\right)+ \drs{\ell}}_{\eqdef\wud^+}\label{eq:dr_bounds}\;,
			\end{equation}
			\item $\wut$ does not decrease the minimal parameter distance:
			\begin{align}
				&&\left|\sortQ{\ell}+\wut-\sortQ{i}\right| &\ge \minD{\mat Q},\ &&\forall i\in [m]\setminus\{\ell\}\label{eq:dr_distance} \\
				\vee&&\sortQ{\ell}+\wut &=\sortQ{i}, &&\exists i\in[m]\;.\label{eq:dr_landing}
			\end{align}
		\end{enumerate}
		Here, $\delta_{\cdot}$ is the Kronecker delta with $\delta_{uv}\defeq 1$ if $u=v$, else $0$, and $\drs{\ell}$ is defined as
		\begin{equation*}
			\drs{\ell}\defeq
			\maxD{\mat Q}\left(\frac{\minD{\singleton{q_u: ~u\in [m]\backslash\singleton{\ell}}}}{\minD{\mat Q}}-1\right)\;.
		\end{equation*}
	\end{theorem}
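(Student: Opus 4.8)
The plan is to track how modifying a single entry $Q_{kl}$ affects the two quantities that determine the dynamic range, namely $\maxD{\mat Q}$ and $\minD{\mat Q}$, and to show that condition~1 controls the numerator while condition~2 controls the denominator. Write $\mat Q' \defeq \mat Q + \wut\vec e_k\vec e_l^\top$, so that $\elemset{\mat Q'}$ is obtained from the multiset $\{q_1,\dots,q_m\}$ by replacing $q_\ell$ with $q_\ell + \wut$ and leaving the other $m-1$ values untouched. I would first record the auxiliary fact that $\minD{\{q_u : u\in[m]\setminus\{\ell\}\}} \ge \minD{\mat Q}$, so $\drs{\ell}\ge 0$; this constant measures the ``slack'' the $\ell$-th value has to move before the remaining values alone dictate a larger minimum gap.

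First I would bound $\maxD{\mat Q'} = \max\elemset{\mat Q'} - \min\elemset{\mat Q'}$ from above. There are two cases depending on whether the moved value $q_\ell + \wut$ becomes the new maximum or minimum of the set. Using \cref{eq:dr_bounds}, the upper bound $\wut \le \sortQ m - \sortQ\ell + \delta_{1\ell}(\sortQ 2 - \sortQ 1) + \drs\ell$ guarantees $q_\ell + \wut$ cannot exceed $\sortQ m$ by more than $\drs\ell$ (with the Kronecker correction handling the degenerate case $\ell = 1$, where removing $q_1$ changes what the second-smallest value is), and symmetrically the lower bound on $\wut$ keeps $q_\ell + \wut$ from dropping below $\sortQ 1$ by more than $\drs\ell$, with $\delta_{m\ell}$ handling $\ell = m$. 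In every case one gets $\maxD{\mat Q'} \le \maxD{\mat Q} + \drs\ell$. Next I would bound $\minD{\mat Q'}$ from below: condition~2 says that either $q_\ell + \wut$ is at distance $\ge \minD{\mat Q}$ from every other value (so the minimal gap of $\mat Q'$ is $\min\{\minD{\mat Q}, \minD{\{q_u:u\neq\ell\}}\} = \minD{\mat Q}$), or $q_\ell + \wut$ coincides with some existing $q_i$ (so it contributes no new nonzero difference and the minimal gap is $\minD{\{q_u:u\neq\ell\}} \ge \minD{\mat Q}$; note this also covers the case $\wut = 0$). Either way $\minD{\mat Q'} \ge \minD{\mat Q}$.

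Combining the two estimates,
\begin{equation*}
	\DR(\mat Q') = \log_2\!\frac{\maxD{\mat Q'}}{\minD{\mat Q'}} \le \log_2\!\frac{\maxD{\mat Q} + \drs\ell}{\minD{\mat Q}}\;,
\end{equation*}
so it remains to check $\maxD{\mat Q} + \drs\ell \le \maxD{\mat Q}\cdot \frac{\minD{\mat Q'}}{\minD{\mat Q}}$ would finish it — but the cleaner route is to plug in the definition of $\drs\ell$ directly: $\maxD{\mat Q} + \drs\ell = \maxD{\mat Q}\cdot \minD{\{q_u:u\neq\ell\}}/\minD{\mat Q}$, and since $\minD{\mat Q'}$ equals either $\minD{\mat Q}$ or $\minD{\{q_u:u\neq\ell\}}$ depending on which branch of condition~2 holds, in both branches the ratio $\maxD{\mat Q'}/\minD{\mat Q'}$ is at most $\maxD{\mat Q}/\minD{\mat Q}$, which is exactly $\DR(\mat Q') \le \DR(\mat Q)$.

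The main obstacle I anticipate is the careful bookkeeping in the two edge cases $\ell = 1$ and $\ell = m$ (the moved value was itself an extremal value), where deleting it genuinely changes $\maxD{}$ of the remaining set — this is precisely what the Kronecker-delta terms $\delta_{1\ell}(\sortQ 2 - \sortQ 1)$ and $\delta_{m\ell}(\sortQ{m-1} - \sortQ m)$ in \cref{eq:dr_bounds} are compensating for, and one has to verify that with those corrections the bound $\maxD{\mat Q'} \le \maxD{\mat Q} + \drs\ell$ still holds; a further subtlety is that when $q_\ell + \wut$ lands exactly on an existing value the multiset loses an entry, so one must confirm $\diffset{\mat Q'}$ is still nonempty (guaranteed since $0 \in \elemset{\mat Q'}$ and $n \ge 2$) and that no division by zero occurs.
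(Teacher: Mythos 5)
Your two building blocks are fine as far as they go, but the step that combines them is a genuine gap, and it is precisely the part of the argument the paper spends most of its proof on. From conditions 1--2 you only establish $\maxD{\mat Q'}\le\maxD{\mat Q}+\drs{\ell}$ and $\minD{\mat Q'}\ge\minD{\mat Q}$ for $\mat Q'=\mat Q+\wut\vec{e}_k\vec{e}_l^{\top}$, and these two facts alone only give $\DR(\mat Q')\le\log_2\bigl((\maxD{\mat Q}+\drs{\ell})/\minD{\mat Q}\bigr)$, which is \emph{larger} than $\DR(\mat Q)$ whenever $\drs{\ell}>0$. Your attempted repair---that $\minD{\mat Q'}$ equals $\minD{\mat Q}$ under \cref{eq:dr_distance} and $\minD{\singleton{q_u:~u\neq\ell}}$ under \cref{eq:dr_landing}, and that these pair up with the numerator so the ratio never exceeds $\maxD{\mat Q}/\minD{\mat Q}$---is asserted, not proved, and the pairing is exactly what fails: which branch of condition 2 holds has nothing to do with whether the numerator actually grew. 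If $\drs{\ell}>0$ (i.e.\ $\sortQ{\ell}$ is the unique value realizing the minimal gap) and $\wut$ overshoots $\sortQ{m}$, then \cref{eq:dr_distance} only forces the moved value to keep distance $\minD{\mat Q}$ from the remaining values, not distance $\minD{\singleton{q_u:~u\neq\ell}}$. Concretely, with value set $\lbrace 0,1,10\rbrace$ (e.g.\ $Q_{11}=1$, $Q_{12}=10$, $Q_{22}=0$, so $\ell$ is the position of the value $1$), one has $\minD{\mat Q}=1$, $\maxD{\mat Q}=10$, $\drs{\ell}=90$; the change $\wut=10$ satisfies \cref{eq:dr_bounds} and \cref{eq:dr_distance}, yet produces the value set $\lbrace 0,10,11\rbrace$ with ratio $11>10$. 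So the inequality you want in the ``distance'' branch simply does not follow from your two bounds, and your plan offers no argument that would close this.

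For comparison, the paper's proof is a case analysis on the size of $\wut$: when $\wut\le\sortQ{m}-\sortQ{\ell}$ (no overshoot), the maximum difference cannot grow and condition 2 alone controls the denominator, which is essentially your easy case; the slack $\drs{\ell}$ enters only in the overshoot case, where the paper additionally requires that the minimal distance itself increases (possible only when $\sortQ{\ell}$ uniquely realizes it), replaces the new minimal distance by $\minD{\singleton{q_u:~u\neq\ell}}$, and solves $\DR(\mat Q)\ge\DR(\mat Q')$ for $\wut$, with the Kronecker-delta corrections handling $\ell\in\lbrace 1,m\rbrace$. In other words, the numerator is allowed to grow by $\drs{\ell}$ only together with a matching growth of the denominator to $\minD{\singleton{q_u:~u\neq\ell}}$---a coupling your write-up would have to make explicit (and which, as the example above shows, requires more than \cref{eq:dr_distance} guarantees in the overshoot regime). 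Treating the overshoot case separately and justifying the denominator growth there is the missing idea.
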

	\begin{proof}
		We only consider an increase of the \QUBO parameter $Q_{kl}$, \ie $\wut>0$, since the results for decreasing $Q_{kl}$ can be deduced analogously.
		Firstly, consider the parameters $\sortQ{\ell}>\sortQ{1}$.
		If 
		\begin{align}
			\wut\le \sortQ{m} - \sortQ{\ell}\;, \label{eq:incr_bound_max} 
		\end{align}
		then $\maxD{\mat Q}$ is not increased and thus to avoid an increase of the $\DR$, $\minD{\mat Q}$ should not be decreased (see \cref{eq:dynamic_range}). 
		This can be achieved by maintaining a distance of at least $\minD{\mat Q}$ to all other \QUBO parameters, \ie,
		\begin{equation}
			\left|\sortQ{\ell} + \wut - \sortQ{i}\right|\ge \minD{\mat Q},~\forall i\in [m]\setminus\{\ell\}\;, \label{eq:avoid_min_decr}
		\end{equation}
		or ``landing'' on an already existing \QUBO parameter, \ie,
		\begin{equation}
			\sortQ{\ell} + \wut=\sortQ{i},~\exists i\in[m]\;. \label{eq:landing}
		\end{equation}
		If the current maximum value is overshot, \ie,
		\begin{align*}
			\wut>\sortQ{m}-\sortQ{\ell}\;,
		\end{align*}
		$\maxD{\mat Q}$ is increased and thus to reduce the $\DR$, $\minD{\mat Q}$ has also to be increased.
		This can only be the case if $\sortQ{\ell}$ is unique and is part of the minimum distance, \ie, $\minD{\mat Q\setminus\{\sortQ{\ell}\}}>\minD{\mat Q}$.
		The change $\wut$ can then be bounded by
		\begin{align*}
			\DR\left(\mat{Q}\right)&\ge \DR\left(\mat{Q}+\wut\vec{e}_k\vec{e}_l^{\top}\right) \\
			\Leftrightarrow\frac{\maxD{\mat Q}}{\minD{\mat Q}}&\ge\frac{\maxD{\mat Q}+\sortQ{\ell}+\wut-\sortQ{m}}{\min\singleton{\minD{\singleton{q_u: ~u\in [m]\backslash\singleton{\ell}}},\minD{\mat Q}+\wut}} \\
			&\ge\frac{\maxD{\mat Q}+\sortQ{\ell}+\wut-\sortQ{m}}{\minD{\singleton{q_u: ~u\in [m]\backslash\singleton{\ell}}}} \\
			\Leftrightarrow \wut &\le \sortQ{m}-\sortQ{\ell} + \maxD{\mat Q}\left(\frac{\minD{\singleton{q_u: ~u\in [m]\backslash\singleton{\ell}}}}{\minD{\mat Q}}-1\right)\;.  \label{eq:dr_decr_overshoot_inner}
		\end{align*}
		
		Secondly, consider $\ell=1$. If the smallest value $\sortQ{1}$ is not unique, we can also deduce bounds \cref{eq:incr_bound_max,eq:avoid_min_decr,eq:landing}. 
		On the other hand, when the smallest value is unique ($\sortQ{2}-\sortQ{1}>0$) the increase $\sortQ{1}+\wut>\sortQ{m}$ does not necessarily increase $\maxD{\mat Q}$. 
		If $\wut >\sortQ{2}-\sortQ{1}$, $\sortQ{1} + \wut$ is not the minimum value anymore but $\sortQ{2}$ is.
		Thus, the difference $\sortQ{2}-\sortQ{1}$ can be added to the bound in \cref{eq:incr_bound_max}
		\begin{align*}
			\wut&\le\left(\sortQ{m} - \sortQ{1}\right) + \left(\sortQ{2}- \sortQ{1}\right)=\sortQ{m} - 2\sortQ{1} + \sortQ{2}\;.
		\end{align*}
		Additionally, if $\sortQ{1}$ is part of the unique minimum distance, we can add $\sortQ{2}-\sortQ{1}$ to the bound in \cref{eq:dr_decr_overshoot_inner}
		\begin{align}
			\wut &\le \sortQ{m}-\sortQ{1} + \sortQ{2}-\sortQ{1}+ \maxD{\mat Q}\left(\frac{\minD{\singleton{q_u: ~u\in [m]\backslash\singleton{\ell}}}}{\minD{\mat Q}}-1\right)\;.
		\end{align}
		Similar bounds can be obtained for a negative change $\wut<0$.
	\end{proof}
	
	\cref{eq:dr_bounds,eq:dr_distance,eq:dr_landing} give us very loose bounds on the \QUBO parameter changes. 
	To determine the exact value, we present a few heuristic approaches.

	\subsubsection{A Greedy Strategy}
	The first heuristic is a greedy strategy which we denote by $\GRE$.
	The \QUBO parameter $Q_{kl}$ is increased if $\sortQ{\ell}<0$ and decreased otherwise, where $\pi(\ell)=(k,l)$.
	For increasing (decreasing) $Q_{kl}$ we choose $\wuh^{\GRE}$ maximally (minimally), \ie, $\wuh^{\GRE}=\wud^+$ ($\wuh^{\GRE}=\wud^-$). 
	If the updated \QUBO parameter lays too close to some other parameter, \ie, 
	\begin{equation*}
		\left|\sortQ{\ell} + \wud^{\pm} - \sortQ{i}\right|\ge \minD{\mat Q},\ \exists i\in [m]\setminus\{\ell\}\;,
	\end{equation*}
	we would set it equal to the next smaller (larger) \QUBO parameter, that is,
	\begin{equation}
		\sortQ{\ell} + \wuh^{\GRE}=\sortQ{i}, ~\sortQ{j}\le \sortQ{\ell} + \wud^{+},~\forall j\le i ~\left(\sortQ{j}\ge \sortQ{\ell} + \wud^{-},~\forall j\ge i\right)\;. \label{eq:gre_next}
	\end{equation}
	Again, recall that there is always a $q_u=0$ for some $u\in[m]$, and thus we may set parameters to $0$.
	For certain target platforms, such as quantum annealers, this is particularly beneficial, as setting a parameter to $0$ allows to discard the coupling between the qubits indexed by $k$ and $l$, which saves hardware resources.
	As an alternative version to \cref{eq:gre_next}, we choose $\wuh^{\GRE_0}$ such that
	\begin{equation*}
		\sortQ{\ell} + \wuh^{\GRE_0}=0, ~0\le \sortQ{\ell} + \wud^{+}, ~\left(0\ge \sortQ{\ell} + \wud^{-}\right)\;. 
	\end{equation*}
	We henceforth call this alternative version $\GRE_0$. 
	
	\subsubsection{Maintaining the Parameter Ordering}
	With the preceding methods, we allowed parameters to cross over each other, changing their ordering.
	However, another heuristic approach is to maintain the ordering of the elements in $\elemset{\mat Q}$, which should intuitively help to preserve the optimum.
	This heuristic is denoted by $\MOR$.
	For this, define bounds on a certain \QUBO parameter $q_{\ell}$
	\begin{align}
		\upperq{\ell}{+}&\defeq \min\left\lbrace \sortQ{t} :~ \sortQ{t}> \sortQ{\ell},t\in[m]\right\rbrace\;,\label{eq:increase_bounds}\\
		\lowerq{\ell}{+}&\defeq \max\left\lbrace \sortQ{t} :~ \sortQ{t}\le \sortQ{\ell},t\in[m]\setminus \{\ell\}\right\rbrace\;, \nonumber\\
		\upperq{\ell}{-}&\defeq \min\left\lbrace \sortQ{t} :~ \sortQ{t}\ge \sortQ{\ell},t\in[m]\setminus \{\ell\}\right\rbrace\;,\nonumber\\
		\lowerq{\ell}{-}&\defeq \max\left\lbrace \sortQ{t} :~ \sortQ{t}< \sortQ{\ell},t\in[m]\right\rbrace\;.\label{eq:decrease_bounds}
	\end{align}
	If all entries of $\bQ$ are unique, then $\upperq{\ell}{+}=\upperq{\ell}{-}$ and $\lowerq{\ell}{+}=\lowerq{\ell}{-}$, so only if duplicate values exist, these bounds on $\sortQ{\ell}$ differ.
	An example clarifying these bounds is given in \cref{ex:qubo}.
	The idea is now that $\sortQ{\ell}$ is changed in such a way that it lies exactly in the middle between $\lowerq{\ell}{\pm}$ and $\upperq{\ell}{\pm}$. 
	For $\sortQ{1}<\sortQ{\ell}<\sortQ{m}$ we increase $\sortQ{\ell}$ if $\sortQ{\ell}-\lowerq{\ell}{-}<\upperq{\ell}{+}-\sortQ{\ell}$ and decrease otherwise. 
	The weight $\sortQ{\ell}$ is thus changed by
	\begin{align*}
		\wuh^{\MOR}=\begin{cases}
			\frac{\upperq{\ell}{+}-\lowerq{\ell}{+}}{2}-\min\{\upperq{\ell}{+}-\sortQ{\ell},\sortQ{\ell}-\lowerq{\ell}{+}\}, &\text{if } \sortQ{\ell}-\lowerq{\ell}{-}<\upperq{\ell}{+}-\sortQ{\ell}, \\
			\frac{\lowerq{\ell}{-}-\upperq{\ell}{-}}{2}+\min\{\upperq{\ell}{-}-\sortQ{\ell},\sortQ{\ell}-\lowerq{\ell}{-}\},  &\text{else}.
		\end{cases}
	\end{align*}
	For the edge case $\ell=m$, $\wuh^{\MOR}$ is given by
	\begin{align*}
		\wuh^{\MOR}=\begin{cases}
			\lowerq{m}{-}-\sortQ{m}+\minD{\mat Q},  &\text{if }\minD{\singleton{q_u: ~u\in [m]\backslash\singleton{m}}}=\minD{\mat Q}, \\
			\minD{\singleton{q_u: ~u\in [m]\backslash\singleton{m}}}-\minD{\mat Q}, &\text{else}.
		\end{cases}
	\end{align*}
	Similarly, for $\ell=1$
	\begin{align*}
		\wuh^{\MOR}=\begin{cases}
			\upperq{1}{+}-\sortQ{1}-\minD{\mat Q},  &\text{if }\minD{\singleton{q_u: ~u\in [m]\backslash\singleton{1}}}=\minD{\mat Q}, \\
			\minD{\mat Q}-\minD{\singleton{q_u: ~u\in [m]\backslash\singleton{1}}}, &\text{else}.
		\end{cases}
	\end{align*}
	
	Having the heuristic change $\wuh^{\cdot}$ at hand, we can determine the final change $\wut$ as
	\begin{align}
		\wut=\min\{\max\{\wuh^{\cdot},\wuo^-\},\wuo^+\}\;,\label{eq:final_change}
	\end{align} 
	which ensures that $\wut\in[\wuo^-,\wuo^+]$, such that the optimum is preserved.
	\begin{example}\label{ex:qubo}
		Consider the following exemplary \QUBO matrix with $n=3$:
		\begin{equation}
			\mat{Q}\defeq 
			\begin{bmatrix}
				-1 & 0.4 & 1 \\ 
				& 0.4 & -0.8 \\ 
				& & -1.5
			\end{bmatrix},
			\label{eq:example_matrix}
		\end{equation}
		with $\DR(\mat{Q})=\log_2 (2.5 / 0.2) = 3.64$.
		The ordering $\sortQ{1},\dots,\sortQ{9}$ is given by 
		\begin{equation*}
			\left(\sortQ{1},\dots,\sortQ{9}\right)=\left(-1.5,-1,-0.8, 0,0,0, 0.4, 0.4, 1\right)
		\end{equation*}
		and is visualized in \cref{fig:matrix_sorting}.
		\begin{figure}
			\centering
			\begin{subfigure}[t]{1\textwidth}
				\centering
				\begin{tikzpicture}
					\draw[latex-latex] (0, 0) -- (10, 0);
					\node at (10.2, 0.0) {$\mathbb{R}$};
					\path[fill=black] (0.455, 0) node[below=0.200] {$-1.5$} circle (3pt);
					\path[fill=black] (2.273, 0) node[below=0.200] {$-1$} circle (3pt);
					\path[fill=black] (3.000, 0) node[below=0.200] {$-0.8$} circle (3pt);
					\draw [decorate, decoration = {calligraphic brace, raise=2pt, amplitude=4pt}, thick] (2.273, 0.200) --  (3.000, 0.200) node[above=5pt,black, pos=0.5]{$\minD{\mat Q}$};
					\path[fill=black] (5.909, 0) node[below=0.200] {$0$} circle (3pt);
					\path[fill=black] (5.909, 0.300) circle (3pt);
					\path[fill=black] (5.909, 0.600) circle (3pt);
					\path[fill=black] (7.364, 0) node[below=0.200] {$0.4$} circle (3pt);
					\path[fill=black] (7.364, 0.300) circle (3pt);
					\path[fill=black] (9.545, 0) node[below=0.200] {$1$} circle (3pt);
					\draw [decorate, decoration = {calligraphic brace, raise=2pt, amplitude=4pt}, thick] (0.455, 0.900) --  (9.545, 0.900) node[above=5pt,black, pos=0.5]{$\maxD{\mat Q}$};
				\end{tikzpicture}
				\caption{We can read off $\maxD{\mat Q}=2.5$ and $\minD{\mat Q}=0.2$.}
				\label{fig:matrix_sorting:min_max}
			\end{subfigure}
			\begin{subfigure}[t]{1\textwidth}
				\centering
				\begin{tikzpicture}
					\draw[latex-latex] (0, 0) -- (10, 0);
					\node at (10.2, 0.0) {$\mathbb{R}$};
					\path[fill=black] (0.455, 0) node[below=0.200] {$\lowerq{1}{-}, \lowerq{1}{+}$} circle (3pt);
					\path[fill=black] (2.273, 0) node[below=0.360] {$\sortQ{1}$} circle (3pt);
					\path[fill=black] (3.000, 0) node[below=0.200] {$\upperq{1}{-}, \upperq{1}{+}$} circle (3pt);
					\path[fill=black] (5.909, 0) node[below=0.200] {$\lowerq{7}{-}$} circle (3pt);
					\path[fill=black] (5.909, 0.300) circle (3pt);
					\path[fill=black] (5.909, 0.600) circle (3pt);
					\path[fill=black] (7.364, 0) node[below=0.200] {$\upperq{7}{-},\sortQ{7},\lowerq{7}{+}$} circle (3pt);
					\path[fill=black] (7.364, 0.300) circle (3pt);
					\path[fill=black] (9.545, 0) node[below=0.200] {$\upperq{7}{+}$} circle (3pt);
				\end{tikzpicture}
				\caption{Bounds (\cref{eq:increase_bounds,eq:decrease_bounds}) on \QUBO parameters $\sortQ{1}=-1$ and $\sortQ{7}=0.4$.}
				\label{fig:matrix_sorting:bounds}
			\end{subfigure}
			\caption{Sorted \QUBO parameters of the matrix given in \cref{eq:example_matrix}. 
				Duplicates are indicated as vertically stacked points. }
			\label{fig:matrix_sorting}
		\end{figure}
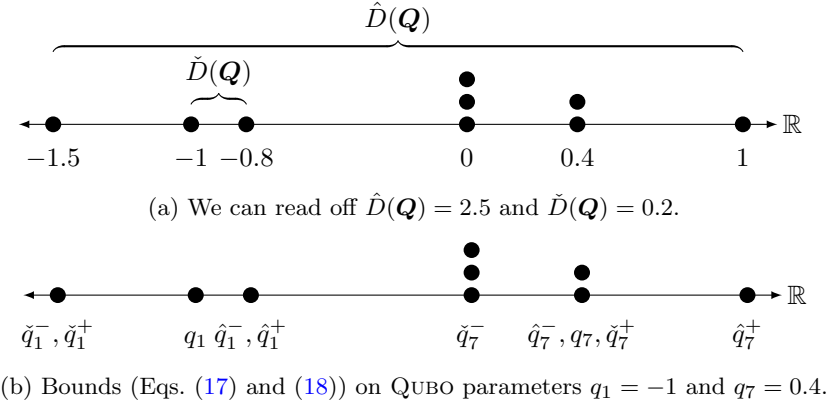
		The bounds from \cref{eq:increase_bounds,eq:decrease_bounds} for two specific parameters can be found in \cref{fig:matrix_sorting:bounds}.
		We want to increase the value $Q_{23}\equiv \sortQ{3}=-0.8$, because this would decrease $\minD{\mat Q}$ and thus decrease the dynamic range.
		We fix $k=2$, $l=3$ and find that $\vec{x}^*=(0, 1, 1)^{\top}$. 
		For maintaining the optimum $\vec{x}^*$ when changing $Q_{23}$, we need to obey \cref{theorem:optimum_bound}.
		Computing accurate bounds, \ie, the exact values, we obtain
		\begin{equation*}
			\check{y}_{00}=0,~ \check{y}_{01}=-1.5,~ \check{y}_{10}=0.4,~ \hat{y}_{11}=-1.9,
		\end{equation*}
		and thus $\wuo^{+}=\min\lbrace0,-1.5,0.4\rbrace- 1.9=0.4$.
		In words, we can maximally increase $Q_{23}$ by $0.4$ to maintain the optimum state, which is depicted in Fig.~\ref{fig:change:heuristics}.
		For decreasing the $\DR$ we have a look at the three heuristics $\MOR$, $\GRE$ and $\GRE_0$.
		The values $\wuh^{\MOR}=0.3$, $\wuh^{\GRE}=1.6$ and $\wuh^{\GRE_0}=0.8$ are also depicted in \cref{fig:change:heuristics}.
		\begin{figure}
			\centering
			\begin{subfigure}[t]{1\textwidth}
				\centering
				\begin{tikzpicture}
					\intervbc{3}{1.2}{1.455}{vir1of4}{$\wuo^{+}$}
					\intervbc{3}{0.9}{1.091}{vir2of4}{$\wuh^{\MOR}$}
					\intervbc{3}{0.6}{2.909}{vir3of4}{$\wuh^{\GRE_0}$}
					\intervbc{3}{0.3}{5.818}{vir4of4}{$\wuh^{\GRE}$ ($=\wud^+$)}
					
					\draw[latex-latex] (0, 0) -- (10, 0);
					\node at (10.2, 0.0) {$\mathbb{R}$};
					\path[fill=black] (0.455, 0) node[below=0.200] {$-1.5$} circle (3pt);
					\path[fill=black] (2.273, 0) node[below=0.200] {$-1$} circle (3pt);
					\path[fill=black] (3.000, 0) node[below=0.200] {$-0.8$}circle (3pt);
					\path[fill=black] (5.909, 0) node[below=0.200] {$0$}circle (3pt);
					\path[fill=black] (7.364, 0) node[below=0.200] {$0.4$}circle (3pt);
					\path[fill=black] (9.545, 0) node[below=0.200] {$1$} circle (3pt);
				\end{tikzpicture}
				\caption{Parameter value intervals: Maximum increase $\wuo^{+}=0.4$ (blue) of \QUBO parameter $Q_{23}$ such that the optimum is preserved, along with heuristic interval limits $\wuh^{\MOR}=0.3$ (cyan), $\wuh^{\GRE}=1.6$ (green) and $\wuh^{\GRE_0}=0.8$ (yellow).}
				\label{fig:change:heuristics}
			\end{subfigure}
			\begin{subfigure}[t]{1\textwidth}
				\centering
				\begin{tikzpicture}
					\draw[latex-latex] (0, 0) -- (10, 0);
					\node at (10.2, 0.0) {$\mathbb{R}$};
					\path[fill=vir2of4] (4.091, 0)  circle (3pt);
					\draw [decorate, decoration = {calligraphic brace, raise=2pt, amplitude=3pt, mirror}, thick] (3.9, -0.200) --  (4.282, -0.200) node[below=5pt,black, pos=0.5]{$\MOR$};
					\path[fill=vir4of4] (4.455, 0) circle (3pt);
					\draw [decorate, decoration = {calligraphic brace, raise=2pt, amplitude=3pt}, thick] (4.264, 0.200) --  (4.646, 0.200) node[above=5pt,black, pos=0.5]{$\GRE,\GRE_0$};
					\path[fill=black] (0.455, 0) node[below=0.200] {$-1.5$} circle (3pt);
					\path[fill=black] (2.273, 0) node[below=0.200] {$-1$} circle (3pt);
					\path[fill=black] (5.909, 0) node[below=0.200] {$0$} circle (3pt);
					\path[fill=black] (7.364, 0) node[below=0.200] {$0.4$} circle (3pt);
					\path[fill=black] (9.545, 0) node[below=0.200] {$1$} circle (3pt);
				\end{tikzpicture}
				\caption{New parameters after changing $Q_{23}=-0.8$ w.r.t. preserving the optimum: $\min\{\wuo^+,\wuh^{\MOR}\}=0.3$ for $\MOR$ (cyan), and $\min\{\wuo^+,\wuh^{\GRE}\}=0.4$ (yellow) for $\GRE$ (and $\GRE_0$ analogously). In both cases, the $\DR$ decreases by one bit, since $\minD{\mat Q}$ is doubled.}
				\label{fig:change:after}
			\end{subfigure}
			\caption{Change of \QUBO parameter $Q_{23}$.}
			\label{fig:change}
		\end{figure}
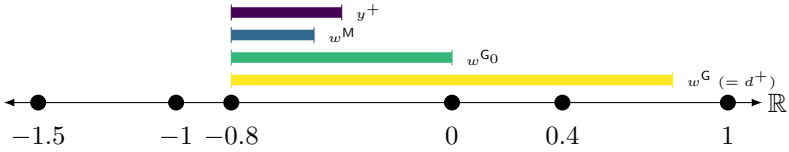
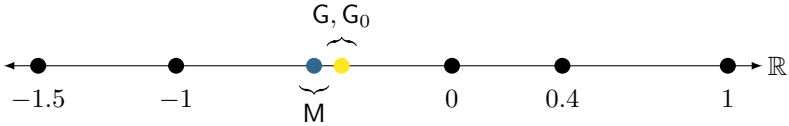
		We observe that $\MOR$ changes $Q_{23}$ to lie in the middle between its neighbors, $\GRE$ maximally increases $Q_{23}$ to maintain the $\DR$, while $\GRE_0$ sets $Q_{23}$ to $0$.
		In \cref{fig:change:after} the final changes are shown, using the three heuristics.
		Following \cref{eq:final_change}, the changes are given by $0.3$ for $\MOR$ and $0.4$ for $\GRE$ and $\GRE_0$, respectively.
		Both result in a doubling of $\minD{\mat Q}$ leading to a new dynamic range decreased by one bit, i.e., $2.64$.
	\end{example}
	
	\subsection{Choosing the Next Parameter}\label{sec:next_param}
	Since the changes of the \QUBO parameters are carried out in a successive fashion, it remains to decide which $k,l\in [n], k\le l$ to pick next.
	
	A very simple approach is to pick a random pair of indices, or iterate over all index pairs in sequence.
	Using this method, many -- or, with growing $n$, most -- iterations will not lead to a DR improvement, as only a few different parameters directly determine the DR, namely those closest and furthest apart (c.f. \cref{fig:matrix_sorting:min_max}).
	Conversely, changing such an ``inactive'' parameter can never lead to a decrease in DR, only to an increase.
	This realization leads to a better strategy, which is choosing only among those index pairs whose parameters determine DR.
	In our experiments, we compute their respective update values $a$ and greedily choose the one that leads to maximal DR reduction, breaking ties randomly.

	\section{Experiments}\label{sec:experiments}
	In this section we conduct experiments for showing the effectiveness of our proposed method.
	
	\subsection{Results for Random Instances}\label{sec:random_qubos}
	We conduct experiments with the parameter update methods presented in \cref{sec:reduce_methods}, namely
	\begin{itemize}
		\setlength{\itemindent}{2em}
		\item[$\GRE$] Greedily choose the parameter update that leads to the greatest DR decrease;
		\item[$\GRE_0$] Like $\GRE$, but prefer to set parameters to 0, if possible;
		\item[$\MOR$] Restrict bounds such that the parameter ordering remains intact.
	\end{itemize}
	\noindent
	The methods described in this article are implemented as part of our Python package \texttt{qubolite}\footnote{\url{https://github.com/smuecke/qubolite}}.
	
	Using $1000$ random \QUBO matrices with the entries being sampled uniformly in the interval $[-0.5, 0.5]$, we follow the different heuristics from \cref{sec:reduce_methods} for $1000$ iterations. 
	For preserving the optimum the upper bounds $\hat y_{ab}$ are computed using a local search and the lower bounds $\check y_{ab}$ are computed by using the roof-dual algorithm \cite{boros2008}.
	In addition we compare the two methods described in \cref{sec:next_param}, namely choosing the next \QUBO weight to be random or according to the largest impact on the decrease of the $\DR$.
	The $95\%$-confidence intervals are indicated.
	
	\cref{fig:dynamic_range_ratio} depicts the dynamic range ratio between the matrices arising by following the different heuristics and the original \QUBO matrix, that is $\DR(\mat Q+\mat{A})/\DR(\mat Q)$. 
	\begin{figure}[!t]
		\centering
		\includegraphics[width=1.0\textwidth]{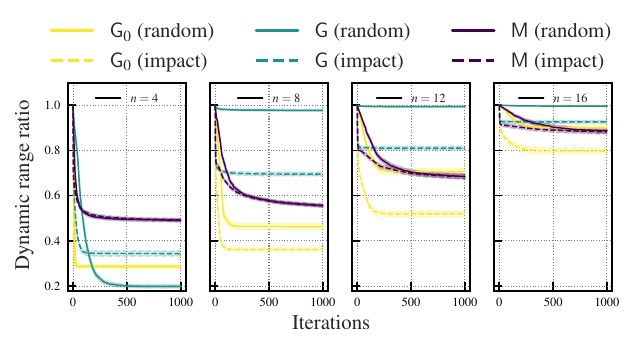}
		\caption{Dynamic range ratio for different \QUBO sizes $n\in\{4, 8, 12, 16\}$.}
		\label{fig:dynamic_range_ratio}
	\end{figure}
	We observe this ratio to be monotonically decreasing, indicating that our proposed heuristics are fulfilling their duty of reducing the $\DR$. 
	This decreasing percentage shrinks with larger $n$, \ie, it is harder to reduce the dynamic range for larger $n$.
	Generally, choosing the next \QUBO weight according to the myopic impact on the $\DR$ is better than a random decision (except for $n=4$), but also comes with a higher computational cost.
	With increasing $n$, it turns out that the $\GRE$ heuristic performs worse than the $\MOR$ heuristic. 
	However, the slight adaption $\GRE_0$ of setting \QUBO weights to $0$ if possible outperforms $\MOR$. 
	
	For further evaluation we first define a metric for comparing different rankings.
	\begin{definition}
		Let $\mat Q\in\QUBOSET_n$, and let $\vec x^1,\vec x^2,\dots,\vec x^{2^n}$ be the binary vectors of $\BB^n$ in lexicographic order.
		The \emph{induced ranking} of $\mat Q$ is a permutation $\pi_{\mat Q}\in [2^n]\rightarrow[2^n]$ such that \begin{equation*}
			f_{\mat Q}(\vec x^{\pi_{\mat Q}(i)})\leq f_{\mat Q}(\vec x^{\pi_{\mat Q}(i+1)}) ~\forall i\in[2^n-1]\;.
		\end{equation*}
	\end{definition}
	
	\begin{definition}
		Let $\pi,\pi':[K]\rightarrow[K]$ be two permutations for some $K>1$.
		The normalized \emph{Kendall $\tau$ distance} between $\pi$ and $\pi'$ is given by \begin{equation*}
			K_d(\pi,\pi')\defeq\frac{1}{2}+\frac{1}{K(K-1)}\sum_{\substack{i,j\in [K]\\ i<j}} \sign\left[(\pi(i)-\pi(j))\cdot(\pi'(i)-\pi'(j))\right]\;.
		\end{equation*}
	\end{definition}
	Intuitively, this distance measures the proportion of disagreement between rankings over all pairs of indices, \ie, the percentage of $(i,j)$ such that $\pi(i)<\pi(j)$, but $\pi'(i)>\pi'(j)$, and vice versa.
	If $K_d(\pi,\pi')=0$, then $\pi$ and $\pi'$ are identical, and if $K_d(\pi,\pi')=1$, then $\pi'$ is the reverse of $\pi$.
	Therefore, Kendall $\tau$ distance gives us a measure of how much the DR reduction ``scrambles'' the value landscape of $f_{\mat Q}$ by computing $K_d(\pi_{\mat Q},\pi_{\mat Q+\mat A})$, which is relevant, \eg, for the performance of local search heuristics.
	
	\cref{fig:state_ordering} shows the Kendall $\tau$ distance between the induced rankings of the \QUBO instances on $\BB^n$ before and after their $\DR$ reduction.
	\begin{figure}[!t]
		\centering
		\includegraphics[width=1.0\textwidth]{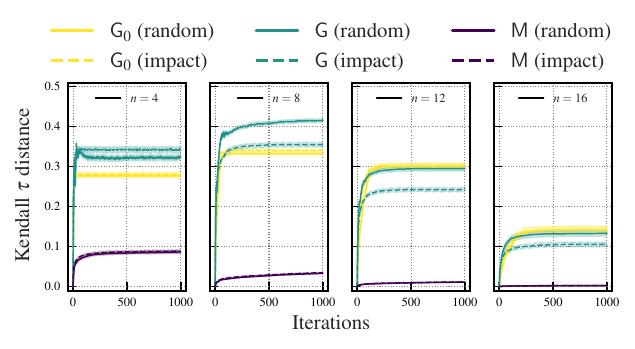}
		\caption{State ordering for different \QUBO sizes $n\in\{4, 8, 12, 16\}$.}
		\label{fig:state_ordering}
	\end{figure}
	We observe that $\GRE$ and $\GRE_0$ ``scramble'' the ordering of binary vectors more strongly than $\MOR$.
	This is exactly what we expected, as $\MOR$'s objective is to preserve the ordering of parameters, which in turn leads to more conservative parameter modifications.
	This reduces the overall ``noise'' that is added to the energy landscape.
	
	Instead of considering the \QUBO values themselves, we depict the change of the \QUBO weight ordering in \cref{fig:qubo_weight_ordering}. 
	Again, the Kendall $\tau$ distance is used and we observe similar effects to \cref{fig:state_ordering}. 
	We can see that $\MOR$ maintains the weight ordering.
	\begin{figure}[!t]
		\centering
		\includegraphics[width=1.0\textwidth]{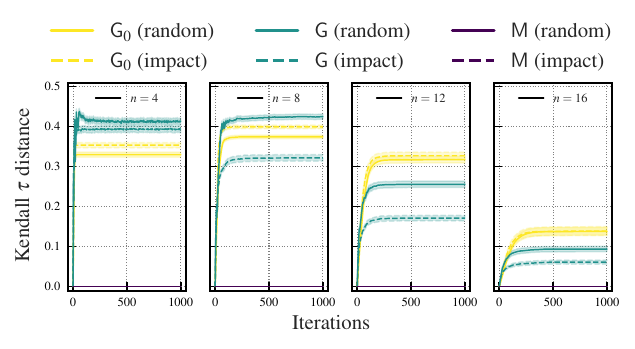}
		\caption{Distance between \QUBO parameter orderings for different sizes $n\in\{4, 8, 12, 16\}$.
			Note that the line for method $\MOR$ (random) is constantly 0.}
		\label{fig:qubo_weight_ordering}
	\end{figure}
	
	Finally, \cref{fig:unique_params_ratio} displays the unique weight ratio, that is the number of unique \QUBO weights of the current iteration divided by the number of unique \QUBO weights of the original \QUBO. $\MOR$ never changes the uniqueness of the weights, while the ratio is monotonically reduced for $\GRE^0$, since many weights are set to $0$. With using $\GRE$, the number of unique weights is first reduced but then starts to increase when the to be changed weights are chosen randomly. 
	\begin{figure}[!t]
		\centering
		\includegraphics[width=1.0\textwidth]{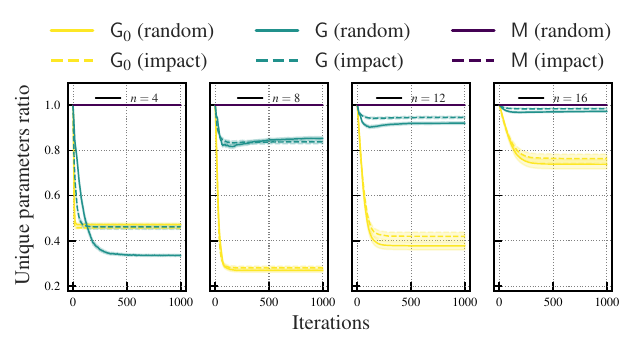}
		\caption{Percentage of unique \QUBO parameter values for different sizes $n\in\{4, 8, 12, 16\}$.
			Note that the line for method $\MOR$ (random) is constantly 1.}
		\label{fig:unique_params_ratio}
	\end{figure}

	\subsection{Results on Quantum Hardware}
	Lastly, we want to investigate to what extend our DR reduction method can help to improve the performance of actual quantum hardware.
	To this end, we follow these steps: \begin{enumerate}
		\item Generate a \QUBO instance $\mat{Q}$
		\item Apply our DR reduction method to obtain $\mat{Q}'$
		\item Apply quantum annealing to both $\mat{Q}$ and $\mat{Q}'$, performing multiple readouts
		\item Compare the occurrence probabilities of the global minimum
	\end{enumerate}
	
	\begin{figure}
		\centering
		\includegraphics[width=\textwidth]{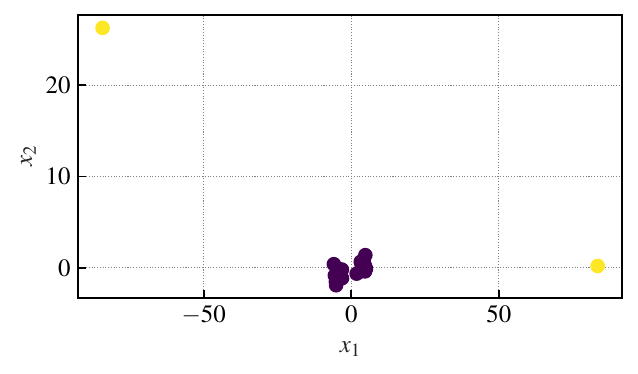}
		\caption{2D data set used for \BC, consisting of 20 points, each with two coordinates.
			The two outliers (shown in a different color) lead to large values in the Gram matrix, from which the \QUBO parameters are derived, leading in turn to a large dynamic range.}
		\label{fig:dwave-data}
	\end{figure}
	
	As two exemplary problems, we perform \BC and \SuS.
	\BC stands for ``binary clustering'' and is an unsupervised ML task, where data points are assigned to one of two classes (``clusters'').
	\SuS consists of finding a subset from a list of values that sum up to a given target value.
	Both have well-established \QUBO embeddings \cite{bauckhage2018,biesner2022}.
	
	To generate data for \BC, we sample i.i.d. $n=20$ 2-dimensional points from an isotropic standard normal distribution.
	Then we create two clusters by applying $(x_1,x_2)\mapsto (x_1-4,x_2)$ to the first ten points, and $(x_1,x_2)\mapsto (x_1+4,x_2)$ to the last ten.
	As a last step, we choose the points 1 and 19 and multiply their coordinates by 20, which leads to a data set containing two outliers, shown in \cref{fig:dwave-data}.
	
	\begin{figure}
		\centering
		\begin{subfigure}{.5\textwidth}
			\centering
			\includegraphics[height=4.5cm]{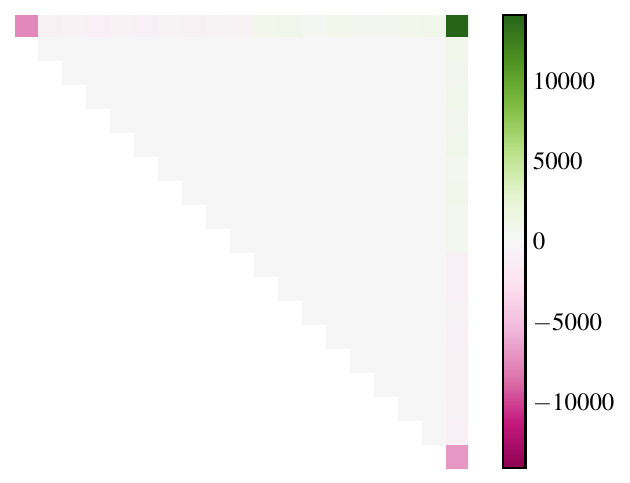}
			\subcaption{\BC\\(DR=22.7921)}
			\label{fig:dwave-qubo}
		\end{subfigure}%
		\begin{subfigure}{.5\textwidth}
			\centering
			\includegraphics[height=4.5cm]{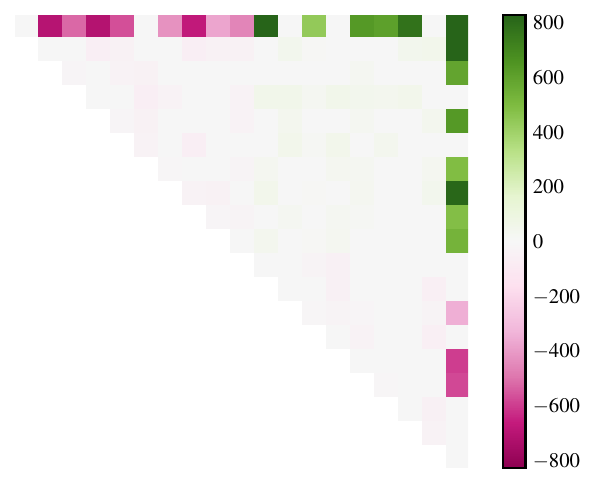}
			\subcaption{\BC\\(reduced, DR=11.0427)}
			\label{fig:dwave-qubo-compr}
		\end{subfigure}
		\begin{subfigure}{.5\textwidth}
			\centering
			\includegraphics[height=4.5cm]{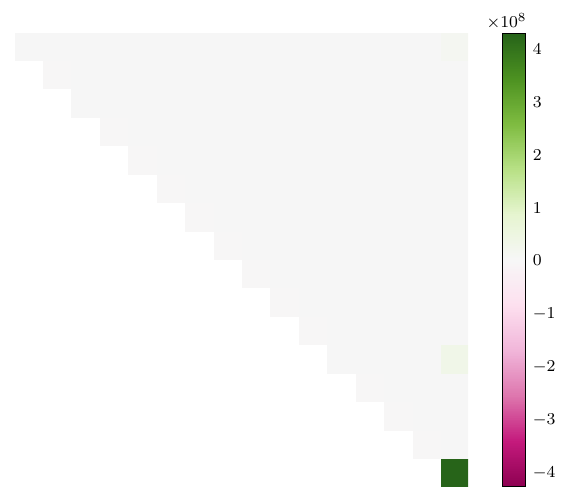}
			\subcaption{\SuS\\(DR=25.6765)}
			\label{fig:dwave-ss-qubo}
		\end{subfigure}%
		\begin{subfigure}{.5\textwidth}
			\centering
			\includegraphics[height=4.5cm]{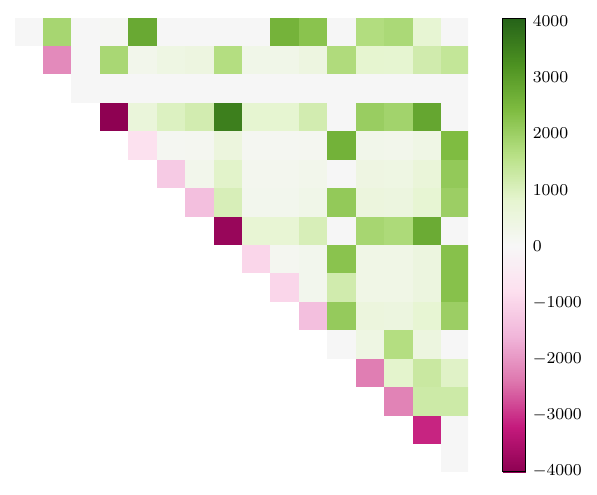}
			\subcaption{\SuS\\(reduced, DR=9.8882)}
			\label{fig:dwave-ss-qubo-compr}
		\end{subfigure}
		\caption{\QUBO parameter matrices for \BC and \SuS problems, before (left) and after (right) applying our proposed dynamic range reduction method.
			Notice that the difference in DR is illustrated by the color scale, which renders most parameter values in the original \QUBO matrices indistinguishable, which is greatly improved on the right.}
	\end{figure}
	
	From this data we derive a \QUBO instance $\mat{Q}$ using the method from \cite{bauckhage2018}.
	We use a linear kernel, which leads to a vanilla 2-means clustering based on Euclidean distance.
	To ensure this problem has only one optimal solution, we assign class 0 to point 20 and only optimize over the remaining 19 points (otherwise there would be two symmetrical solutions).
	The resulting parameter values are shown in \cref{fig:dwave-qubo}.
	As is apparent from the color scale, the DR is very high (22.7921 bits), which makes most values near 0 completely indistinguishable.
	
	Now, we apply our $\GRE_0$ DR reduction method to $\mat{Q}$, since it was the most promising one apparent from \cref{sec:random_qubos}.
	Again we use a local search and the roof-dual algorithm for computing optimum-preserving bounds.
	To limit execution time, we give our algorithm a fixed budget of 100 iterations.
	The resulting \QUBO instance $\mat{Q}'$ is shown in \cref{fig:dwave-qubo-compr}; clearly, much more detail is visible, as the color scale is much narrower, already hinting at the lower DR.
	Indeed, we find that $\DR(\mat{Q}')=11.0427$, which is a reduction by more than half.
	
	Similarly, we sample a \SuS problem, using the same approach as for \cref{fig:rounding-error} described in \cref{sec:rounding-error-method}.
	We use the same algorithm configuration as before, but allow for 150 iterations.
	The \QUBO parameter matrices are shown in \cref{fig:dwave-ss-qubo,fig:dwave-ss-qubo-compr}.
	The DR is $25.6765$ before and $9.8882$ after.
	
	Next, we attempt to solve all four \QUBO instances on a D-Wave quantum annealer.
	To this end, we use the Advantage system 4.1, which we access through the Python interface.
	The D-Wave annealers have a fixed connectivity structure, i.e., only a subset of qubit pairs can be assigned a weight.
	Therefore, dense \QUBO problems (like ours) must be embedded into this connectivity graph structure through redundant encoding and additional constraints.
	To improve comparability, we have this embedding computed once for the original \QUBO and re-use it for the compressed one.
	This is possible because they have the same size, differing only in parameter magnitudes.
	
	\begin{figure}
		\centering
		\includegraphics[width=\textwidth]{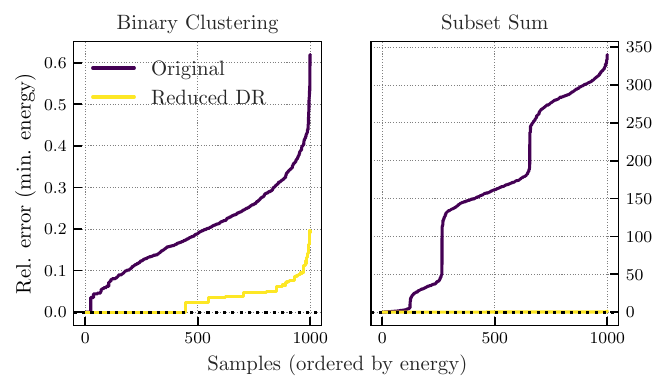}
		\caption{Relative energies of 1000 samples obtained from the D-Wave quantum annealer for the original (dark blue) and reduced (yellow) \QUBO instances.
			The samples are sorted in ascending order of energy.
			The relative error is computed w.r.t. the ground-truth minimal energy computed by brute force; lower is better.
			We observe that reducing DR increases the prevalence of the global optimum and yields more low-energy solutions in general.}
		\label{fig:dwave-energies}
	\end{figure}
	
	We perform $1000$ \emph{reads} for each \QUBO instance and record their energy values.
	Recall that our proposed DR reduction method only keeps the solution vectors intact, but changes the parameter values, therefore the energy values are no longer comparable after reduction.
	For this reason, we compute the ground-truth minimal energies $v^*$ for both $\mat{Q}$ and $\mat{Q}'$, and plot the \emph{relative deviation} of the measured energies $v$ w.r.t. these values, \begin{equation*}
		\left|\frac{v^*-v}{v^*}\right|\;,
	\end{equation*}%
	along the y-axis in \cref{fig:dwave-energies}.
	
	As is apparent from the figure, the DR-reduced \QUBO instances yield the global optimum much more frequently than the original instances of high DR.
	We verified that the vector of minimal energy in the compressed \QUBO instances indeed yield the minimal energy of the original problems, confirming that our method preserves the optimum correctly.
	To quantify the improvement, we consider the energy values of samples obtained from the compressed \QUBO instances w.r.t. the original (uncompressed) \QUBO. 
	By inspecting the total number of samples with an energy equal to or lower than the lowest sampled energies obtained from the original \QUBO instances, we find that the prevalence of low-energy solutions is $17.76$ ($237$) higher for the compressed \BC (\SuS) \QUBO instances.
	We conclude that our method helps a quantum annealer to find the optimum more reliably, since the ICEs limit the dynamic range of the hardware parameters \cite{dwaveice2023}.
	
	\section{Conclusion}
	While \QUBO is arguably one of the most prominent and well-studied problem classes, many of their properties are not yet well understood. Our results fall into the cutset of 
	theory and practice by providing a formalization of practical limitations w.r.t. rounding-induced perturbations of the problem parameters. Based on our formalization, we propose a completely novel methodology for reducing the precision required to encode \QUBO instances. 
	More precisely, 
	we employ the notion of dynamic range to characterize \QUBO instances w.r.t. the number of bits needed to faithfully encode their parameters. 
	We show that a high dynamic range indicates that the parameters simultaneously cover a large value range while also containing fine gradations. 
	Further, we defined the optimum inclusion relation $\optleq$, which formalizes the idea that the set of minimizing solutions of a \QUBO instance $\mat{Q}$ is a subset of those minimizing $\mat{Q}'$.
	This implies that a minimizing solution of $\mat{Q}$ is guaranteed to also minimize $\mat{Q}'$.
	Based on scale-invariance properties, we derive a theoretical bound on the minimal scaling factor required such that, after rounding, the optimal solution remains unchanged. 
	
	Applying our findings in a naive manner results in an intractable method due to the unavailability of the true minimal values. 
	Nevertheless, 
	we establish upper and lower bounds, that allow us to compute intervals in which parameter values can be modified without affecting the minimizing vectors. Hence, leaving optimal solutions intact. 
	This opens a space of \QUBO instances over which we can try to minimize the dynamic range while preserving the minima.
	To this end, a greedy algorithm is devised that iteratively modifies parameter values while reducing the problem's dynamic range. 
	On top of the bare algorithm, we investigate different selection strategies for choosing the next entry of the \QUBO parameter matrix.
	
	In a series of experiments we put our theoretical framework into practice.
	First, we randomly sample \QUBO instances and apply our algorithm with different parameter selection strategies.
	We observe that all of them lead to a decrease of DR, the most effective one being the greedy strategy which tries to set parameters to zero, if possible ($\GRE_0$).
	In two more experiments, we took one binary clustering problem and one \textsc{SubsetSum} problem, reduced the resulting \QUBO matrices' DR, and solved both the original and the compressed problem on a D-Wave quantum annealer.
	We find that our compression method greatly improves the performance of QA, leading to a significantly higher probability to observe the global optimum.
	
	We uncovered numerous intriguing open questions while working on this article, which may serve to inspire further research.
	
	The algebraic notion of optimum inclusion can be naturally extended to an equivalence relation, which induces a partitioning on $\QUBOSET_n$.
	The number of equivalence classes must necessarily be finite and bounded above by $2^{2^n}-1$, which implies there is only a finite number of meaningfully different \QUBO instances in terms of their set of minimizing bit vectors.
	A fascinating research endeavor is to find, for each equivalence class, the representative of lowest DR, or---even better---an algorithm to do so.
	
	In the context of QA, the spectral gap, which we used for our theoretical bound, is crucial to the duration of the annealing process.
	It has been proven that for certain random problem instances, annealing time becomes exponentially long, rendering QA just as inefficient as brute force \cite{altshuler2009}.
	As we have shown in this article that DR has an impact on the result quality as well, it would be interesting to investigate the interrelation between the two quantities.
	
	The algorithm we proposed for reducing DR contains heuristic elements worth improving upon.
	We are currently investigating 
	AI-based parameter selection strategies, \eg, putting the DR reduction task in a Reinforcement Learning framework.
	In summary, we provide an efficient pre-processing technique that can be applied out-of-the-box to arbitrary \QUBO instances to improve their feasibility. 
	
	\section*{Statements and Declarations}
	\subsection*{Author Contributions}
	All authors contributed to the conception and design.
	The experiments were conducted by Sascha M\"ucke and Thore Gerlach.
	The first draft was written by Sascha M\"ucke and Thore Gerlach.
	All authors commented on and revised previous versions of the manuscript, and approved the final version.
	
	\subsection*{Competing Interests}
	The authors of this article declare that there are no competing interests.
	
	\subsection*{Funding}
	This research has been funded by the Federal Ministry of Education and Research of Germany and the state of North-Rhine Westphalia as part of the Lamarr-Institute for Machine Learning and Artificial Intelligence.
	
	\FloatBarrier

	\newpage
	\appendix
	\section{Methodology for \cref{fig:rounding-error}}
	\label{sec:rounding-error-method}
	
	In this section we describe the experimental setup used to obtain the results shown in \cref{fig:rounding-error}.
	
	We used random instances of the \textsc{SubsetSum} problem.
	For this problem, we are given a list $A=(a_1,\dots,a_n)$ of $n$ integers and a target value $T$.
	Our task is to find a subset $S\subseteq [n]$ such that $\sum_{i\in S}a_i=T$.
	This problem lends itself naturally to \QUBO, where we use $n$ binary variables which indicate if $i\in S$ for each $i$.
	The energy function is then simply \begin{equation*}
		f(\vec{x})=\biggl(\sum_ia_ix_i-T\biggr)^2 \propto \sum_{i,j} a_ia_jx_ix_j -2T\sum_i a_ix_i\;,
	\end{equation*}%
	which yields the \QUBO parameters \begin{equation*}
		Q_{ij}=\begin{cases}2a_ia_j &\text{if } i\neq j\\
			a_i^2-2Ta_i &\text{otherwise.}\end{cases}
	\end{equation*}
	
	For our experiment, we set $n=16$ and sampled the elements of $A$ i.i.d. as $\abs{\nint{10\cdot Z}}$, where $Z$ follows a standard Cauchy distribution.
	This leads to occasional outliers with large magnitudes, which in turn yields \QUBO instances with large DR.
	
	Next, we sample the number of summands $k$ from $\nint{U}$ where $U$ follows a triangular distribution with parameters $a=\frac{n}{5}$, $b=\frac{n}{2}$ and $c=\frac{4n}{5}$, so that, on average, half of the elements of $A$ contribute to the sum.
	Finally, we sample $k$ indices from $[n]$ without replacement to obtain $S$, and set $T=\sum_{i\in S}a_i$.
	This way, we handily generate problems where we already know the global optimum.
	
	Following this process we generate $N=100,000$ \QUBO instances $\mathcal{D}=\lbrace\mat{Q}^i\rbrace_{i\in[N]}$ and compute their DR.
	From the empirical distribution of DR values, we compute the $\frac{i}{5}$-quantiles, which we label $\beta_i$, for $i\in\lbrace 0,\dots,5\rbrace$.
	For each $i\in[4]$, we define bins $B_i=[\beta_{i-1},\beta_i)$, and $B_5=[\beta_4,\beta_5]$.
	Now, we partition our data set by sorting the \QUBO instances into the 5 bins by their DR, $\mathcal{D}_i\defeq\lbrace\mat Q^j:~j\in[N],\,\DR(\mat{Q}^j)\in B_i\rbrace$.
	For each $\mathcal D_i$, we scale and round each instance to the number of bits indicated by the x-axis.
	
	The y-axis shows ``optimum correctness'', which indicates the proportion of rounded \QUBO instances that retained a global optimum after scaling and rounding.
	To compute this value, firstly let $correct(Q'\given Q)\defeq 1$ if $\mat Q'\optleq\mat Q$ and $0$ otherwise.
	We check this by obtaining $v^*=\min_{\vec x}~\vec x^\top\mat Q\vec x$ and $\tilde{\vec x}^*\in\optset{\mat Q'}$ by brute force; if $f_{\mat Q}(\tilde{\vec x}^*)=v^*$, then necessarily $\mat Q'\optleq\mat Q$.
	For each number of bits $n_b$ and bin index $i$, the y-value we plot is $\frac{1}{20,000}\sum_{\mat Q\in\mathcal D_i}correct(\tilde{\mat Q}^{(n_b)}\given \mat Q)$, where $\tilde{\mat Q}^{(n_b)}$ denotes the version of $\mat Q$ scaled and rounded to a bit precision of $n_b$.
	
	Intuitively, this tells us the approximate probability that, for a randomly sampled \textsc{SubsetSum} problem, we still obtain the correct solution when parameter precision is reduced to the given number of bits.
	Clearly, this probability increases when we use more bits to encode the parameters.
	This is expected, as higher precision leads to more accurate representation of the energy landscape.
	Further, the result shows that the probability decreases with higher DR, which we also expected:
	A high DR indicates that there are fine distinctions between parameter values, making it more likely for rounding to collapse parameters and, consequently, lose energy distinctions between solution vectors.
\end{document}